\providecommand{\U}[1]{\protect\rule{.1in}{.1in}}
\providecommand{\U}[1]{\protect\rule{.1in}{.1in}}
\newtheorem{acknowledgement}{Acknowledgement}
\newtheorem{definition}{Definition}
\newtheorem{lemma}{Lemma}
\newtheorem{proposition}{Proposition}
\newtheorem{remark}{Remark}
\newenvironment{proof}[1][Proof]{\noindent\textbf{#1.} }{\ \rule{0.5em}{0.5em}}
\def\dfrac{\frac}
\begin{document}

\title {Local and nonlocal solvable structures in ODEs reduction}

\author{D Catalano Ferraioli$^1$, P Morando$^2$}

\address{$^1$Dipartimento di Matematica, Universit\`{a} di Milano, via Saldini 50, I-20133 Milano, Italy}
\address{$^2$Istituto di Ingegneria Agraria, Facolt\`{a} di Agraria, Universit\`{a} di Milano, Via Celoria, 2 - 20133 Milano, Italy}

\ead{catalano@mat.unimi.it and paola.morando@polito.it}

\begin{abstract}

Solvable structures, likewise solvable algebras of local
symmetries, can be used to integrate scalar ODEs by quadratures.
Solvable structures, however, are particularly suitable for the
integration of ODEs with a lack of local symmetries. In fact,
under regularity assumptions, any given ODE always admits solvable
structures even though finding them in general could be a very
difficult task. In practice a noteworthy simplification may come
by computing solvable structures which are adapted to some
admitted symmetry algebra. In this paper we consider solvable
structures adapted to local and nonlocal symmetry algebras of any
order (i.e., classical and higher). In particular we introduce the
notion of nonlocal solvable structure.

\end{abstract}

%Uncomment for PACS numbers title message
\pacs{02.30.Hq} \ams{34C14, 70S10}
% Keywords required only for MST, PB, PMB, PM, JOA, JOB?
\vspace{2pc} \noindent{\it Keywords}: Solvable structures, local
symmetries, nonlocal symmetries, $\lambda$-symmetries, ODE
reduction.
% Uncomment for Submitted to journal title message
%\submitto{\JPA}
% Comment out if separate title page not required

%\maketitle

\section{Introduction}

In recent years a renewed interest in the classical works of Lie and Cartan
led to further developments in the geometrical methods for the study of
ordinary differential equations (ODEs). In particular, symmetry methods for
ODEs have received an increasing attention in the last fifteen years. It is
well known that if an ODE $\mathcal{E}$ admits a local symmetry, this can be
used to reduce the order of $\mathcal{E}$ by one. This procedure, usually
referred to as symmetry reduction method, is particularly useful when
$\mathcal{E}$ is a $k$-order ODE whose local symmetries form a solvable
$k$-dimensional Lie algebra. In this case, in fact, $\mathcal{E}$ can be
completely integrated by quadratures \cite{Fe,Olv1,Vin-et-al}. However,
finding all local symmetries of a given ODE $\mathcal{E}$ is not always
possible. In fact local symmetries (classical or higher) of a $k$-order ODE in
the unknown $u$ are defined by the solutions of a linear PDE depending on the
derivative of $u$ up to the order $k-1$. Since the general solution of this
PDE cannot be found unless one knows the general solution of $\mathcal{E}$,
one usually only searches for particular solutions depending on derivative of
$u$ up to order $k-2$. Therefore, in practice, $\mathcal{E}$ could not be
reduced by quadratures if it does not admit a solvable $k$-dimensional algebra
of such symmetries.

Nevertheless, one may encounter equations solvable by quadratures but with a
lack of local symmetries. Examples of this kind, well known in recent
literature \cite{Bluman-Kumei,Bluman-Reid,Ca,GMM,Govinder-Leach,MuRo}, seem to
prove that local symmetries are sometimes inadequate and raise the question of
whether a generalization of the notion of symmetry would lead to a more
effective reduction method. For this reason, in the last decades various
generalizations of the classical symmetry reduction method have been proposed.
Among these, in our opinion, a special attention is deserved by the notions of
$\lambda$\emph{-symmetry} on the one hand and \emph{solvable structure} on the
other hand. Both notions, introduced in \cite{MuRo} and \cite{Ba},
respectively, are suitable for a wide range of applications
\cite{BaPri,BaPri1,Ca,CicGaeMor,GaMo,GMM,Mo,Sac,ShePri} and will be the
starting point of our forthcoming discussion. In particular, we will provide a
more effective reduction method which interconnects both notions.

The relevance of $\lambda$-symmetries is due to the fact that many equations,
not possessing Lie point symmetries, admit $\lambda$-symmetries and these can
be used to reduce the order exactly as in the case of standard symmetries.
Despite their name, however, $\lambda$-symmetries are neither Lie point nor
higher symmetries. As shown in \cite{Ca}, $\lambda$-symmetries of an ODE
$\mathcal{E}$ can be interpreted as shadows of some nonlocal symmetries. In
practice it means that, by embedding $\mathcal{E}$ in a suitable system
$\mathcal{E}^{\prime}$ determined by the function $\lambda$, any $\lambda
$-symmetry of $\mathcal{E}$ can be recovered as a local symmetry of
$\mathcal{E}^{\prime}$. This interpretation of $\lambda$-symmetries has many
advantages. First of all, it makes possible to give a precise geometric
meaning to the $\lambda$-invariance property; second, it allows to reinterpret
$\lambda$-symmetry reduction method as a particular case of the standard
symmetry reduction.

For what concern solvable structures, they were introduced by Basarab-Horwath
in \cite{Ba} and further investigated in \cite{BaPri,BaPri1,HaAt,ShePri}. The
original purpose was the generalization of the classical result relating the
integrability by quadratures of involutive distributions to the knowledge of a
sufficiently large solvable symmetry algebra. The main point of this extension
is that the vector fields assigning a solvable structure do not need to be
symmetries, nevertheless they allow to implement the reduction procedure.
However, finding local solvable structures for a $k$-order ODE is not much
easier than finding $k$-dimensional solvable algebras of local symmetries, if
any. A noteworthy simplification, in general, may come by computing solvable
structures adapted to the admitted symmetry algebras, if any. It follows that
one certainly takes advantage of the presence of any kind of symmetry. In
particular, one can include the nonlocal symmetries corresponding to $\lambda$-symmetries.

In this paper we consider solvable structures adapted to local and nonlocal
symmetry algebras of any order (i.e., classical and higher), and we introduce
the notion of nonlocal solvable structure.

The paper is organized as follows. In section 2 we collect all notations and
basic facts we need on the geometry of differential equations. In particular
we recall the definitions of local and nonlocal symmetries and also the
interpretation of $\lambda$-symmetries as shadows of nonlocal symmetries. In
section 3, we recall the results of \cite{Ba} and \cite{BaPri} in a form
suitable to our further discussion on the integration of ODEs. Then, in
section 4, we discuss our reduction scheme of ODEs via (higher order) local
and nonlocal solvable structures. In particular, in this section, we discuss a
practical method for the computation of solvable structures. Finally, in
section 5 we collect some examples which provide insight into the application
of the proposed reduction scheme.

\section{Preliminaries}

Troughout the paper we assume that the reader is familiar with the geometry of
differential equations. Nevertheless, we collect here some notations and basic
facts we need in the paper. The reader is referred to
\cite{Bluman-Anco,Bluman-Kumei,Gae,Olv1,Olv2,Ste, Vin-et-al} for further details.

\subsection{ODEs as submanifolds of jet spaces}

Let $M$ and $E$ be smooth manifolds and $\pi:E\rightarrow M$ be a
$q$-dimensional bundle. We denote by $\pi_{k}:J^{k}(\pi)\rightarrow M$ the
$k$-order \emph{jet bundle} associated to $\pi$ and by $j_{k}(s)$ the
$k$-order \emph{jet prolongation} of a section $s$ of $\pi$. Since in this
paper we are only concerned with the case $\dim M=1$, we assume that $M$ and
$E$ have local coordinates $x$ and $(x,u^{1},...,u^{q})$, respectively.
Correspondingly, the induced natural coordinates on $J^{k}(\pi)$ will be
$(x,u_{i}^{a})$, $1\leq a\leq q$, $i=0,1,...,k$, where the $u_{i}^{a}$'s are
defined by $u_{i}^{a}(j_{\infty}(s))=d^{i}\left(  u^{a}(s)\right)  /dx^{i}$,
for any section $s$ of $\pi$. Moreover, when no confusion arises, Einstein
summation convention over repeated indices will be used.

The $k$\emph{-order jet space} $J^{k}(\pi)$ is a manifold equipped with the
smooth distribution $\mathcal{C}^{k}$ of tangent planes to graphs of $k$-order
jet prolongations $j_{k}(s)$. This is the \emph{contact (or Cartan)
distribution} \emph{of} $J^{k}\left(  \pi\right)  $.

In this framework a $k$-th order system of differential equations can be
regarded as a submanifold $\mathcal{E}\subset J^{k}(\pi)$ and any solution of
the system is a section of $\pi$ whose $k$-order prolongation is an integral
manifold of the restriction $\left.  \mathcal{C}^{k}\right\vert _{\mathcal{E}%
}$ of the contact distribution to $\mathcal{E}$. In this paper we will deal
only with (systems of) ordinary differential equations $\mathcal{E}$ which are
in \emph{normal form} and \emph{not underdetermined}.

The natural projections $\pi_{h,k}:J^{h}(\pi)\rightarrow J^{k}(\pi)$, for any
$h>k$, allow one to define the bundle of infinite jets $J^{\infty}%
(\pi)\rightarrow M$ as the inverse limit of the tower of projections
$M\longleftarrow E\longleftarrow J^{1}(\pi)\longleftarrow J^{2}(\pi
)\longleftarrow...$.

The manifold $J^{\infty}(\pi)$ is infinite dimensional with induced
coordinates $(x,u_{i}^{a})$, $1\leq a\leq q$, $i=0,1,...$, and the
$\mathbb{R}$-algebra of smooth functions on $J^{\infty}(\pi)$ is defined as
$\mathcal{F}(\pi)=\cup_{l}C^{\infty}(J^{l}(\pi))$. The set $D(\pi)$ of vector
fields on $J^{\infty}(\pi)$ has the structure of a Lie algebra, with respect
to the usual Lie bracket \cite{Chern,Olv1,Vin-et-al}. Moreover one can also
define a \emph{contact distribution }$\mathcal{C}$ on $J^{\infty}(\pi)$,
generated by the \emph{total derivative} operator
\[
D_{x}=\partial_{x}+u_{i+1}^{a}\partial_{u_{i}^{a}}.
\]
Notice that $\mathcal{C}$ is the annihilator space of the \emph{contact forms}
$\theta_{s}^{a}=du_{s}^{a}-u_{s+1}^{a}dx$.

Given a $k$-th order differential equation $\mathcal{E}=\{F=0\}$, the
$l$\emph{-th prolongation of }$\mathcal{E}$ is the submanifold $\mathcal{E}%
^{(l)}:=\left\{  D_{x}^{s}(F)=0:s=0,1,...,l\right\}  $. Analogously, we define
the \emph{infinite} \emph{prolongation} $\mathcal{E}^{(\infty)}$.

\subsection{Local symmetries}

The infinitesimal symmetries of $\mathcal{C}^{k}$ are called \emph{Lie
symmetries} of $J^{k}(\pi)$. These symmetries can be divided in two classes
(see \cite{Olv1,Vin-et-al}): \emph{Lie point symmetries}, which are obtained
by prolonging vector fields $X$ on $E$, and \emph{Lie contact symmetries},
which are obtained by prolonging symmetries $X$ on $J^{1}(\pi)$.

By definition, symmetries of $\mathcal{C}^{k}$ which are tangent to
$\mathcal{E}$ are called \emph{classical symmetries of }$\mathcal{E}$.

An analogous geometric picture holds on the infinite jet spaces. However, contrary to the case of finite order jet spaces, symmetries of
$\mathcal{C}$ cannot always be recovered by infinite prolongations of Lie
symmetries. In fact it can be proved that $X=\xi\partial_{x}+\eta_{i}%
^{a}\partial_{u_{i}^{a}}$ is an infinitesimal symmetry of $\mathcal{C}$ if and
only if $\xi,\eta^{a}\in\mathcal{F}(\pi)$ are arbitrary functions and
\[
\eta_{i}^{a}=D_{x}(\eta_{i-1}^{a})-u_{i}^{a}D_{x}(\xi),\qquad\eta_{0}^{a}%
=\eta^{a}.
\]
Hence, $X$ is the infinite prolongation of a Lie point (or contact) symmetry
iff $\xi,\eta_{0}^{a}$ are functions on $E$ (or $J^{1}(\pi)$, respectively).

On $J^{\infty}(\pi)$, symmetries $X$ of $\mathcal{C}$ which are tangent to
$\mathcal{E}^{(\infty)}$ are called \emph{higher symmetries} of $\mathcal{E}$
and are determined by the condition $\left.  X(F)\right\vert _{\mathcal{E}%
^{(\infty)}}=0$.

When working on infinite jets spaces, since $D_{x}$ is a trivial symmetry of
$\mathcal{C}$, it is convenient to gauge out from higher symmetries the terms
proportional to $D_{x}$. This leads to consider only symmetries in the so
called \emph{evolutive form}, i.e., symmetries of the form $X=D_{x}%
^{i}(\varphi^{a})\partial_{u_{i}^{a}}$, $\varphi^{a}:=\eta^{a}-u_{1}^{a}\xi$.
The functions $\varphi^{a}$ are called the \emph{generating functions} (or
characteristics) of $X$.

For the applications we are interested in, we only need to consider symmetries
of $\mathcal{C}$ which are tangent to $\mathcal{E}^{(\infty)}$. This choice
turns out to be also convenient since it noteworthy simplifies computations
(see \cite{Vin-et-al} and \cite{AKO} fore more details and other aspects of
$\infty$-jets theory).

As already remarked in the introduction, local symmetries (Lie or higher) of a
$k$-order ODE
\[
\mathcal{E}=\left\{  u_{k}=f(x,u,u_{1},...,u_{k-1})\right\}
\]
in the unknown $u$ are defined by the solutions of a linear PDE depending on
the derivative of $u$ up to the order $k-1$. This linear PDE is usually called
the determining equation and is obtained by writing in coordinates the
symmetry condition $\left.  X(u_{k}-f)\right|  _{\mathcal{E}^{(\infty)}}=0$.
Unfortunately, the general solution to this PDE cannot be found and one
usually searches only for particular solutions, if any, which depends on
derivative of $u$ up to order $k-2$. Hence, it is not unusual to run into ODEs
for which we are unable to compute local symmetries.

\subsection{Nonlocal symmetries and $\lambda$-symmetries}

In recent literature, to overcome the difficulties due to a lack of local
symmetries, various attempts have been made to find a more general notion of
symmetry. Hence, some new classes of symmetries have been introduced. Among
these generalizations there are those introduced by Muriel and Romero in
\cite{MuRo} and known as $\lambda$-symmetries.

The notion of \emph{nonlocal symmetry} we use in this paper furnishes the
conceptual framework to deal with many of these generalizations. We follow
here the approach to nonlocality based on the theory of \emph{coverings} (see
\cite{Kras-Vin} and also \cite{Vin-et-al}). However, since we only deal with
ODEs, our approach will be noteworthy simplified. The interested reader is
referred to \cite{Vin-et-al} for the general theory of coverings and nonlocal symmetries.

\begin{definition}
Let $\mathcal{E}$ be a $k$-order ODE on a $q$-dimensional bundle $\pi$. We
shall say that a smooth bundle $\kappa:\mathcal{\widetilde{E}\rightarrow
E}^{(\infty)}$ is a covering for $\mathcal{E}$ if the manifold
$\mathcal{\widetilde{E}}$ is equipped with a 1-dimensional distribution
$\mathcal{\widetilde{C}}=\{\mathcal{\widetilde{C}}_{p}\}_{_{p\in
\mathcal{\widetilde{E}}}}$ and, for any point $p\in\mathcal{\widetilde{E}}$,
the tangent mapping $\kappa_{\ast}$ gives an isomorphism between
$\mathcal{\widetilde{C}}$ and the restriction $\left.  \mathcal{C}\right\vert
_{\mathcal{E}^{(\infty)}}$ of the contact distribution of $J^{\infty}(\pi)$ to
$\mathcal{E}^{(\infty)}$.
\end{definition}

The dimension of the bundle $\kappa$ is called the dimension of the covering
and is denoted by $\dim(\kappa)$. Below we only consider the case $\dim(\kappa)=1$.

\begin{definition}
Nonlocal symmetries of $\mathcal{E}$ are the symmetries of the distribution
$\mathcal{\widetilde{C}}$ of a covering $\kappa:\mathcal{\widetilde
{E}\rightarrow E}^{(\infty)}$.
\end{definition}

Here follows a coordinate description of the notion of covering and nonlocal
symmetry. Since below we will restrict our attention only to nonlocal
symmetries of scalar ODEs, in the following formulas we will restrict only to
the case when $u^{a}=u$.

Let $\pi$ be a trivial $1$-dimensional bundle over $\mathbb{R}$, with standard
coordinates $(x,u)$, and let
\begin{equation}
\mathcal{E}:=\left\{  u_{k}=f(x,u,u_{1},...,u_{k-1})\right\}  .
\label{ODE_scalare}%
\end{equation}
Below we will consider only coverings where $\kappa$ is a trivial bundle
$\kappa:\mathcal{E}^{(\infty)}\times\mathbb{R}\rightarrow\mathcal{E}%
^{(\infty)}$, hence denoting by $w$ the standard coordinate in $\mathbb{R}$,
the distribution $\mathcal{\widetilde{C}}$ on $\mathcal{\widetilde{E}}$ is
generated by the vector field
\begin{equation}
\left.  \widetilde{D}_{x}\right|  _{\mathcal{\widetilde{E}}}=\overline{D}%
_{x}+H\partial_{w} \label{D_widetilde}%
\end{equation}
where $H$ is a smooth function on $\mathcal{E}^{(\infty)}\times\mathbb{R}$ and
$\overline{D}_{x}=\partial_{x}+u_{1}\partial_{u}+...+f\partial_{u_{k-1}}$ is
the restriction to $\mathcal{E}^{(\infty)}$ of the total derivative operator
on $J^{\infty}(\pi)$. Hence, the covering $\kappa$ is determined by the system (see \cite{Ca})
\begin{equation}
\mathcal{E}^{\prime}:=\left\{  u_{k}=f,w_{1}=H\right\}  . \label{covering}%
\end{equation}

Nonlocal symmetries of $\mathcal{E}$ are symmetries of the vector field
(\ref{D_widetilde}) and can be determined through a symmetry analysis of the
system $\left(  \mathcal{E}^{\prime}\right)  ^{(\infty)}$ on $J^{\infty
}(\widetilde{\pi})$. Therefore nonlocal symmetries of $\mathcal{E}$ have the
form
\begin{equation}
Y=\xi\partial_{x}+\eta_{i}\partial_{u_{i}}+\psi_{i}\partial_{w_{i}}
\label{pippo}%
\end{equation}
with $\eta_{i}=\widetilde{D}_{x}(\eta_{i-1})-\widetilde{D}_{x}(\xi)u_{i}$ and
$\psi_{i}=\widetilde{D}_{x}(\psi_{i-1})-w_{i}\widetilde{D}_{x}(\xi)$.

An interesting example of nonlocal symmetry occurring in literature is related
to the notion of $\lambda$-symmetry for an ODE $\mathcal{E}$ (see
\cite{MuRo}). Despite their name, in fact, $\lambda$-symmetries are neither
Lie symmetries nor higher symmetries of $\mathcal{E}$. Nevertheless, as
discussed in \cite{Ca}, $\lambda$-symmetries can be interpreted as shadows of
nonlocal symmetries. More precisely, $\mathcal{E}$ admits a $\lambda$-symmetry
$X$ iff $\mathcal{E}^{\prime}=\{u_{k}=f,w_{1}=\lambda\}$, with $\lambda\in
C^{\infty}(\mathcal{E})$, admits a (higher) symmetry with generating functions
of the form $\varphi^{\alpha}=e^{w}\varphi_{0}^{\alpha}(x,u,u_{1}%
,...,u_{k-1})$, $\alpha=1,2$.

Since $\lambda$-symmetries are of great interest in the applications, and
analogously their nonlocal counterparts, it is convenient to introduce the following

\begin{definition}
If a covering system $\mathcal{E}^{\prime}$, defined by (\ref{covering}) with
$H=\lambda$ and $\lambda\in C^{\infty}(\mathcal{E})$, admits a nonlocal
symmetry $Y$ with generating functions
\begin{equation}
\varphi^{\alpha}=e^{w}\varphi_{0}^{\alpha}(x,u,u_{1},...,u_{k-1}),\qquad
\alpha=1,2 \label{gen_func_lambda}%
\end{equation}
then $\mathcal{E}^{\prime}$ will be called a $\lambda$-covering for
$\mathcal{E}$ defined by (\ref{ODE_scalare}).
\end{definition}

\section{Solvable structures}

In this section we will recall basic definitions and facts on (local) solvable
structures in the form we need in our study. The reader is referred to
\cite{Ba} and \cite{BaPri} for further details.

The original purpose of solvable structures was to generalize the classical result stating
that the knowledge of a solvable $k$-dimensional algebra of symmetries for an
$(n-k)$-dimensional involutive distribution $\mathcal{D}$, on an
$n$-dimensional manifold $N$, guarantees that $\mathcal{D}$ can be integrated
by quadratures. Solvable structures are a generalization of solvable symmetry
algebras which allows one to keep this classical result.

Since in this paper we restrict our attention to the integration of
$1$-dimensional distributions, we will summarize the results of \cite{Ba} and
\cite{BaPri} only in this case.

Given a $1$-dimensional distribution $\mathcal{D}=<Z>$, on an $n$-dimensional
manifold $N$, the definition of a solvable structure for $\mathcal{D}$ is the following

\begin{definition}
\label{Def_solv}The vector fields $\{Y_{1},...,Y_{n-1}\}$ on $N$ form a
solvable structure for $\mathcal{D}=<Z>$ if and only if, denoting
$\mathcal{D}_{0}=\mathcal{D}$ and $\mathcal{D}_{h}=<Z,Y_{1},...,Y_{h}>$, the
following two conditions are satisfied:\newline\textbf{(i) }$\mathcal{D}%
_{n-1}=TN$;\newline\textbf{(ii)} $L_{Y_{h}}\mathcal{D}_{h-1}\subset
\mathcal{D}_{h-1}$, $\forall h\in\{1,...,n-1\}.$
\end{definition}

In particular, given a solvable structure $\{Y_{1},...,Y_{n-1}\}$ for
$\mathcal{D}$, one has the flag of integrable distributions%
\begin{equation}
<Z>=\mathcal{D}_{0}\subset\mathcal{D}_{1}\subset...\subset\mathcal{D}_{n-1}=TN
\label{flag}%
\end{equation}
with $\mathcal{D}_{s-1}=\{[X,Y]:X,Y\in\mathcal{D}_{s}\}$, and $[X,Y]$ denoting
the Lie bracket of $X$ and $Y$.

The main difference with the definition of a solvable symmetry algebra of
$\mathcal{D}$ is that the fields belonging to a solvable structure do not need
to be symmetries for $\mathcal{D}$. This, of course, gives more freedom in the
choice of the fields one can use in the integration of $\mathcal{D}$.

\begin{remark}
\label{Remark}It is straightforward, by the definition above, that in
principle a solvable structure for $\mathcal{D}$ always exists in a
neighborhood of a non-singular point for $\mathcal{D}$. In fact, if
$\{x^{i}\}$ is a local chart on $N$ such that $Z=\partial_{x^{1}}$, one can
simply consider the solvable structure generated by $\partial_{x^{j}}$,
$j=2,...,n$. This structure is in particular an Abelian algebra of symmetries
for $\mathcal{D}$. Nevertheless, for a given distribution $\mathcal{D}$, it is
difficult to find explicitly such a local chart.
\end{remark}

In the case of $1$-dimensional distributions, the main result of \cite{Ba} can
be stated as follows

\begin{proposition}
\label{Main_basarab}Let $\{Y_{1},...,Y_{n-1}\}$ be a solvable structure for a
$1$-dimensional distribution $\mathcal{D}=<Z>$ on an orientable $n$%
-dimensional manifold $N$. Then, for any given volume form $\Omega$ on $N$,
$\mathcal{D}$ can be described as the annihilator space of the system of
$1$-forms $\{\omega_{1},...,\omega_{n-1}\}$ defined as
\begin{equation}
\omega_{i}=\frac{1}{\Delta}(Y_{1}\lrcorner...\lrcorner\widehat{Y}_{i}%
\lrcorner...\lrcorner Y_{n-1}\lrcorner Z\lrcorner\Omega), \label{forme_omega}%
\end{equation}
where the hat denotes omission of the corresponding field, $\lrcorner$ denotes
the insertion operator (see \cite{Spivak}) and $\Delta$ is the function
defined by
\[
\Delta=Y_{1}\lrcorner Y_{2}\lrcorner...\lrcorner Y_{n-1}\lrcorner
Z\lrcorner\Omega.
\]
Moreover, one has that
\[%
\begin{array}
[c]{l}%
d\omega_{n-1}=0,\\
d\omega_{i}=0\hspace{0.1in}mod\{\omega_{i+1},...,\omega_{n-1}\}
\end{array}
\]
for any $i\in\{1,...,n-2\}$.
\end{proposition}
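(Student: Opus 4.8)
The plan is to prove the statement in two stages: first establish that the forms $\{\omega_1,\dots,\omega_{n-1}\}$ indeed annihilate $\mathcal{D}=\langle Z\rangle$ and form a consistent system, then verify the closure conditions on their exterior derivatives.

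For the first stage, I would begin by observing that since $N$ is $n$-dimensional and orientable, the volume form $\Omega$ is a nowhere-vanishing $n$-form, and the function $\Delta=Y_1\lrcorner\cdots\lrcorner Y_{n-1}\lrcorner Z\lrcorner\Omega$ is the value of $\Omega$ on the frame $(Z,Y_1,\dots,Y_{n-1})$; by condition (i) of Definition \ref{Def_solv}, namely $\mathcal{D}_{n-1}=TN$, these $n$ vector fields are pointwise linearly independent, so $\Delta$ is nonvanishing and the division in (\ref{forme_omega}) is legitimate. I would then check that each $\omega_i$ annihilates $Z$: contracting $\omega_i$ with $Z$ amounts to inserting $Z$ twice into $\Omega$ (once in the definition of $\omega_i$ and once more), which vanishes by antisymmetry of the top form. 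The same antisymmetry shows $\omega_i(Y_j)=\delta_{ij}$ after normalization by $\Delta$, so the $\omega_i$ are linearly independent and cut out precisely the line $\langle Z\rangle$ as their common annihilator.

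For the second stage, the key input is the flag of \emph{integrable} distributions (\ref{flag}), which follows from condition (ii) together with the Frobenius theorem — the nesting $\mathcal{D}_0\subset\mathcal{D}_1\subset\cdots\subset\mathcal{D}_{n-1}$ consists of involutive distributions because $L_{Y_h}\mathcal{D}_{h-1}\subset\mathcal{D}_{h-1}$ guarantees that adjoining $Y_h$ to an involutive $\mathcal{D}_{h-1}$ keeps it involutive. The geometric meaning is that $\mathcal{D}_i$ is annihilated by $\{\omega_{i+1},\dots,\omega_{n-1}\}$. The statement $d\omega_i=0 \bmod\{\omega_{i+1},\dots,\omega_{n-1}\}$ is then the dual, differential-form reformulation of involutivity of $\mathcal{D}_i$: by the Frobenius theorem in its Pfaffian form, a distribution is integrable precisely when the exterior derivative of each defining $1$-form lies in the differential ideal generated by those same forms. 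The top form $\omega_{n-1}$ annihilates the largest proper distribution $\mathcal{D}_{n-2}$, which is integrable, and since it is a single $1$-form the ideal condition forces $d\omega_{n-1}=0$ outright.

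The main obstacle I anticipate is making the correspondence between the insertion-operator definition (\ref{forme_omega}) and the Frobenius closure condition fully rigorous, rather than merely invoking duality. Concretely, I would want to express $d\omega_i$ using the identity relating the exterior derivative of a contracted form to Lie derivatives along the contracting fields (Cartan's magic formula applied iteratively), and then use condition (ii) to show that every term that survives must carry at least one factor $\omega_j$ with $j>i$. Tracking the antisymmetrization and the $1/\Delta$ normalization — in particular handling the derivatives of $\Delta$ itself — is where the bookkeeping becomes delicate; I expect that choosing, at least locally, adapted coordinates in which the flag straightens (so that $\omega_i$ becomes proportional to a coordinate differential modulo higher forms) will be the cleanest way to discharge this computation and confirm both the $d\omega_{n-1}=0$ and the modular closure relations simultaneously.
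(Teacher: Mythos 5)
Your first stage is sound: nonvanishing of $\Delta$ from condition (i), the relations $\omega_{i}(Z)=0$ and $\omega_{i}(Y_{j})=\pm\delta_{ij}$ by antisymmetry of $\Omega$, and hence the identification of $\mathcal{D}_{s}$ with $Ann(\omega_{s+1},\dots,\omega_{n-1})$ — this matches the starting point of the paper's proof. The gap is in your second stage: the closure statements of the proposition are strictly stronger than anything Frobenius duality can give, and your claim that ``since it is a single $1$-form the ideal condition forces $d\omega_{n-1}=0$ outright'' is false. Integrability of the corank-one distribution $\mathcal{D}_{n-2}=Ann(\omega_{n-1})$ yields only $d\omega_{n-1}=\alpha\wedge\omega_{n-1}$ for some $1$-form $\alpha$; the differential ideal generated by a single $1$-form contains all its multiples, not just $0$. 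For instance $\omega=e^{x_{1}}dx_{n}$ annihilates the integrable distribution $\langle\partial_{x_{1}},\dots,\partial_{x_{n-1}}\rangle$ yet is not closed. Likewise, for $i\leq n-2$ the proposition asserts $d\omega_{i}\equiv 0$ modulo $\{\omega_{i+1},\dots,\omega_{n-1}\}$ with $\omega_{i}$ itself \emph{excluded} from the modulus, whereas Frobenius applied to $\mathcal{D}_{i-1}=Ann(\omega_{i},\dots,\omega_{n-1})$ gives congruence only modulo the larger set $\{\omega_{i},\omega_{i+1},\dots,\omega_{n-1}\}$. Dropping $\omega_{i}$ from the modulus is precisely the content of the theorem — it is what upgrades mere integrability to integrability by quadratures — and it is exactly what the normalization by $1/\Delta$ buys. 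The paper's proof therefore does not stop at Frobenius: it proves by a separate computation, using the algebraic formula for the exterior derivative evaluated on the frame together with condition (ii), that $1/\Delta$ is an integrating factor for $Y_{1}\lrcorner\cdots\lrcorner Y_{n-2}\lrcorner Z\lrcorner\Omega$, whence $d\omega_{n-1}=0$ exactly, and then repeats this on the integral manifolds of $\{\omega_{i+1},\dots,\omega_{n-1}\}$ for the lower forms.

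Your closing paragraph shows you sense the missing step (``every term that survives must carry at least one factor $\omega_{j}$ with $j>i$''), but the route you propose to discharge it does not work: straightening the flag in adapted coordinates, which is available by Frobenius, only tells you that $\omega_{n-1}=f\,dx_{n}$ and that $\omega_{i}$ lies in the span of $dx_{i+1},\dots,dx_{n}$; the whole question is whether the coefficient functions depend only on the appropriate variables, and that is again the integrating-factor statement, which adapted coordinates do not provide for free. So the essential computation — expanding $d\omega_{i}$ on the frame $(Z,Y_{1},\dots,Y_{n-1})$, using the conditions $L_{Y_{h}}\mathcal{D}_{h-1}\subset\mathcal{D}_{h-1}$ and the definition of $\Delta$ (including the terms coming from $d(1/\Delta)$) to kill every component not carrying a factor $\omega_{j}$, $j>i$ — is not deferrable bookkeeping; it is the proof, and it is absent from your argument.
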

\begin{proof}
Here we only give a sketch of the proof. The reader is referred to
\cite{Ba} or \cite{BaPri} for more details. In view of the
definition of the forms
$\{\omega_{i}\}$ one readily gets that%
\begin{equation}
\mathcal{D}_{s}=Ann(\omega_{s+1},...,\omega_{n-1}),\qquad\forall
s=0,1,...,n-2.\label{D_s_E}%
\end{equation}
All these distributions, for $s=0,...,n-1$, are completely
integrable for the definition of solvable structure. Hence, by
applying Frobenius theorem to the distribution
$\mathcal{D}_{n-2}=Ann(\omega_{n-1})$, one should get
$d\omega_{n-1}=\alpha\wedge\omega_{n-1}$ for some $1$-form
$\alpha$. On the other hand, by using algebraic formula for the
exterior derivative $d$ and the formula defining the function
$\Delta$, one can show that $1/\Delta$ is an
integrating factor for the $1$-form $Y_{1}\lrcorner...\lrcorner Y_{n-2}%
\lrcorner Z\lrcorner\Omega$. Hence, one gets that
$d\omega_{n-1}=0$. The rest of the proof is based on similar
computations which allow one to show that, on the integral
manifolds of $\{\omega_{i+1},...,\omega_{n-1}\}$, one has that
$1/\Delta$ is still an integrating factor for the $1$-forms $Y_{1}%
\lrcorner...\lrcorner\widehat{Y}_{i}\lrcorner...\lrcorner
Y_{n-1}\lrcorner Z\lrcorner\Omega$.
\end{proof}

It follows that, under the assumptions of this proposition, $\mathcal{D}$ can
be completely integrated by quadratures (at least locally). In fact, in view
of $d\omega_{n-1}=0$, locally $\omega_{n-1}=dI_{n-1}$ for some smooth function
$I_{n-1}$. Now, since $d\omega_{n-2}=0$ mod$\{\omega_{n-1}\}$, $\omega_{n-2}$
is closed on the level manifolds of $I_{n-1}$. Then, iterating this procedure,
it is possible to compute all the integrals $\{I_{1},...,I_{n-1}\}$ of
$\mathcal{D}$ and eventually find its (local) integral manifolds in implicit
form $\{I_{1}=c_{1},...,I_{n-1}=c_{n-1}\}$.

\begin{remark}
\label{Riduz_parz}If one knows a solvable structure, Proposition
\ref{Main_basarab} allows a complete reduction (or integration) of
$\mathcal{D}$. However, if $Y_{1}$ is a symmetry of $\mathcal{D}=<Z>$ and
$Z,Y_{1},Y_{2},...,Y_{n-1}$ is just a frame on $M$, then one can still define
the forms $\{\omega_{i}\}$ and $\mathcal{D}=Ann(\omega_{1},...,\omega_{n-1})$.
In this case, one cannot completely integrate $\mathcal{D}$, but just reduce
its codimension by one. This partial reduction is a particular case of that
described in the paper \cite{Anderson-Fels} and in terms of the forms
$\omega_{i}$'s can be described as follows. Since $Y_{1}$ is a symmetry of
$\mathcal{D}$, then it is also a symmetry of the Pfaffian system $\mathcal{I}$
generated by $\omega_{1},...,\omega_{n-1}$. It follows that locally the
quotient of $M$ by the action induced by $Y_{1}$ is a manifold $\overline{M}$
naturally equipped with a Pfaffian system $\overline{\mathcal{I}}$. Since
$Y_{1}\lrcorner\omega_{j}=\delta_{1j}$, one has that $\overline{\mathcal{I}%
}=\{\omega_{2},...,\omega_{n-1}\}$. Now the above mentioned reduced
distribution is just $\overline{\mathcal{D}}=Ann(\overline{\mathcal{I}})$.
\end{remark}

Proposition \ref{Main_basarab} can also be applied to the integration of ODEs.
In fact, in view of above discussion, one can think of an ODE $\mathcal{E}$ as
a manifold equipped with the $1$-dimensional distribution $\mathcal{D}%
=<\overline{D}_{x}>$. In particular, equation $\mathcal{E}$ defined by
(\ref{ODE_scalare}) can be seen as a $(k+1)$-dimensional submanifold of
$J^{k}(\pi)$ naturally equipped with the volume form
\begin{equation}
\Omega=dx\wedge du\wedge...\wedge du_{k-1}, \label{volume_E}%
\end{equation}
 where $\wedge$ denotes the usual wedge product.

Then, by applying Proposition \ref{Main_basarab} to ODEs one readily gets the following

\begin{proposition}
\label{Main_ODE}Let $\{Y_{1},...,Y_{k}\}$ be a solvable structure for the
$1$-dimensional distribution $\mathcal{D}=<\overline{D}_{x}>$ on $\mathcal{E}$
defined by (\ref{ODE_scalare}). Then $\mathcal{E}$ is integrable by
quadratures and the general solution of $\mathcal{E}$ can be obtained in
implicit form by subsequently integrating the system of one forms $\omega
_{k},...,\omega_{1}$, in the given order.
\end{proposition}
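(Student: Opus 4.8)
The plan is to obtain this statement as a direct specialization of Proposition \ref{Main_basarab}. First I would identify the ingredients: take $N=\mathcal{E}$, which by (\ref{ODE_scalare}) is the $(k+1)$-dimensional submanifold of $J^{k}(\pi)$ with coordinates $(x,u,u_{1},\dots,u_{k-1})$, so that $n=\dim\mathcal{E}=k+1$; take $Z=\overline{D}_{x}=\partial_{x}+u_{1}\partial_{u}+\cdots+f\partial_{u_{k-1}}$ as the generator of $\mathcal{D}=<\overline{D}_{x}>$; and take the volume form $\Omega$ of (\ref{volume_E}). Since $\Omega$ is a globally defined, nowhere-vanishing top form, $\mathcal{E}$ is orientable, so all hypotheses of Proposition \ref{Main_basarab} are met once $\{Y_{1},\dots,Y_{k}\}$ is a solvable structure for $\mathcal{D}$ (here $n-1=k$, matching the number of fields). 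Applying the proposition produces the $1$-forms $\omega_{1},\dots,\omega_{k}$ defined by (\ref{forme_omega}), with $d\omega_{k}=0$ and $d\omega_{i}=0\bmod\{\omega_{i+1},\dots,\omega_{k}\}$ for $i=1,\dots,k-1$.

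Next I would run the quadrature procedure already sketched after Proposition \ref{Main_basarab}, making explicit that it consumes exactly $k$ integrations. Because $d\omega_{k}=0$, the Poincar\'{e} lemma gives $\omega_{k}=dI_{k}$ with $I_{k}$ obtained by a quadrature; this is the first integral to be computed, which is why the forms are integrated in the order $\omega_{k},\dots,\omega_{1}$. Restricting to a level set $\{I_{k}=c_{k}\}$, the relation $d\omega_{k-1}=0\bmod\{\omega_{k}\}$ shows that $\omega_{k-1}$ is closed there, hence $\omega_{k-1}=dI_{k-1}$ after another quadrature. Iterating, on the $(i+1)$-dimensional leaf $\{I_{k}=c_{k},\dots,I_{i+1}=c_{i+1}\}$ the form $\omega_{i}$ is closed and yields a potential $I_{i}$ by a quadrature, down to $i=1$. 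This produces the $k$ first integrals $I_{1},\dots,I_{k}$ by quadratures alone, establishing integrability by quadratures.

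Finally I would close the argument by translating the common level sets into solutions of $\mathcal{E}$. By (\ref{D_s_E}) with $s=0$ one has $\mathcal{D}=Ann(\omega_{1},\dots,\omega_{k})$, and since each $\omega_{i}=dI_{i}$ on the relevant leaf, the $1$-dimensional submanifold $\{I_{1}=c_{1},\dots,I_{k}=c_{k}\}$ is an integral manifold of $\mathcal{D}=<\overline{D}_{x}>$. As $\overline{D}_{x}$ spans the restriction to $\mathcal{E}$ of the Cartan distribution, these integral curves are precisely the $k$-jet prolongations of solutions of (\ref{ODE_scalare}); letting the $k$ constants $c_{1},\dots,c_{k}$ vary recovers the general solution in the implicit form $\{I_{1}=c_{1},\dots,I_{k}=c_{k}\}$. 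The genuine analytic content, namely that $1/\Delta$ is an integrating factor producing the closure relations, already resides in Proposition \ref{Main_basarab}, so the only point requiring care here is the last identification: verifying that the $1$-dimensional integral manifolds of $<\overline{D}_{x}>$ coincide with prolonged solutions, so that the $k$ first integrals indeed encode the general solution with the correct number of arbitrary constants.
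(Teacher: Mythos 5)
Your proposal is correct and follows exactly the route the paper intends: the paper states this proposition as an immediate specialization of Proposition \ref{Main_basarab} (with $N=\mathcal{E}$, $n=k+1$, $Z=\overline{D}_{x}$, and $\Omega$ from (\ref{volume_E})), combined with the quadrature procedure described right after that proposition. You have merely made explicit the two points the paper leaves implicit --- the dimension count $n-1=k$ and the identification of integral curves of $<\overline{D}_{x}>$ with prolonged solutions --- both of which are handled correctly.
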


\begin{remark}
The facts observed above in Remark \ref{Riduz_parz} apply also to the case of
ODEs. As an example, consider the ODE $\mathcal{E}:=\{u_{2}=-u/x^{2}\}$. This
ODE admits the classical symmetry $x\partial_{x}$. The reduced equation
$\overline{\mathcal{E}}$ is equipped with the Pfaffian system $\overline
{\mathcal{I}}$ generated by $\omega_{2}$, or equivalently by $\Delta\omega
_{2}$. Since $\Delta\omega_{2}$ is a $1$-form on the quotient $\overline
{\mathcal{E}}$, then it can be written only in terms of the invariants of
$Y_{1}$. Now, since the invariants of $Y_{1}$ are just $I_{1}=u,I_{2}=xu_{x}$,
one can check that $\overline{\mathcal{I}}=\{dI_{2}-(I_{2}-I_{1})dI_{1}%
/I_{2}\}$. Then integral manifolds of $\overline{\mathcal{I}}$ are exactly the
solutions of the reduced ODE $dI_{2}/dI_{1}=(I_{2}-I_{1})/I_{2}$ obtained by
using standard symmetry reduction method \cite{Olv1}.
\end{remark}

\section{New applications of solvable structures to the integration of ODEs}

In this section we discuss our new reduction scheme for ODEs.

As remarked above, solvable structures have been introduced to generalize the
standard symmetry reduction method. Finding solvable structures, however, is
not always easy and in practice a noteworthy simplification may come by
computing solvable structures adapted to an admitted symmetry algebra
$\mathcal{G}$, if any. In fact, as discussed below, under this assumption the
determining equation of a solvable structure extending $\mathcal{G}$ become
more affordable. Moreover, since the language of covering allows one to handle
nonlocal symmetries just like local symmetries, we can take advantage of
solvable structures adapted also to nonlocal symmetries. In this paper we consider 
solvable structures adapted to any kind (local, higher and nonlocal)
of symmetry algebras.

Now we first discuss the practical determination of solvable structures for
ODEs. To this end, it is useful to have the following technical Lemma which
expresses symmetry condition in the framework of multivector fields (see
\cite{Chern}).

\begin{lemma}
\label{lemma1}Let $\mathcal{D}=<X_{1},...,X_{r}>$ be an $r$-dimensional
distribution on an $n$-dimensional manifold $N$. The vector field $X$ is a
symmetry of $\mathcal{D}$ iff
\begin{equation}
L_{X}(X_{i})\wedge X_{1}\wedge X_{2}\wedge...\wedge X_{r}=0,\qquad\forall
i=1,2,...,r. \label{*}%
\end{equation}

\end{lemma}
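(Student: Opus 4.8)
The plan is to reduce the statement to a pointwise fact in the exterior algebra of the tangent spaces $T_pN$, after first rephrasing the symmetry condition in terms of the Lie bracket. Recall that by definition $X$ is a symmetry of $\mathcal{D}$ precisely when $L_{X}\mathcal{D}\subset\mathcal{D}$, i.e.\ $L_{X}(Y)=[X,Y]\in\mathcal{D}$ for every $Y\in\mathcal{D}$. First I would observe that it suffices to test this on the generators $X_{1},\dots,X_{r}$: writing an arbitrary section of $\mathcal{D}$ as $Y=f^{i}X_{i}$ with $f^{i}$ smooth functions and using the Leibniz rule $[X,f^{i}X_{i}]=f^{i}[X,X_{i}]+X(f^{i})\,X_{i}$, the second summand already lies in $\mathcal{D}$, so $L_{X}(Y)\in\mathcal{D}$ for all $Y$ iff $L_{X}(X_{i})=[X,X_{i}]\in\mathcal{D}$ for each $i$. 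Thus the whole lemma is equivalent to the assertion that, for each fixed $i$, $[X,X_{i}]\in\mathcal{D}$ iff $[X,X_{i}]\wedge X_{1}\wedge\cdots\wedge X_{r}=0$.

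The key step is then the purely algebraic characterisation of membership in a subspace via vanishing of a decomposable multivector. Since $\mathcal{D}$ is assumed $r$-dimensional, the fields $X_{1},\dots,X_{r}$ form a local frame for $\mathcal{D}$ and are in particular pointwise linearly independent; hence $X_{1}\wedge\cdots\wedge X_{r}$ is a nowhere-vanishing decomposable $r$-vector. I would prove, at an arbitrary point $p$, the elementary fact that for any $V\in T_{p}N$ one has $V\wedge X_{1}\wedge\cdots\wedge X_{r}=0$ iff $V\in\mathrm{span}(X_{1}|_{p},\dots,X_{r}|_{p})=\mathcal{D}_{p}$. The forward implication ``$V\in\mathcal{D}_{p}\Rightarrow$ wedge $=0$'' is immediate: writing $V=c^{j}X_{j}$, each term $c^{j}X_{j}\wedge X_{1}\wedge\cdots\wedge X_{r}$ has a repeated factor and so vanishes by antisymmetry of $\wedge$. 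For the converse I would argue contrapositively: if $V\notin\mathcal{D}_{p}$ then $\{V,X_{1}|_{p},\dots,X_{r}|_{p}\}$ are linearly independent, hence extend to a basis of $T_{p}N$, and therefore $V\wedge X_{1}\wedge\cdots\wedge X_{r}$ is one of the basis elements of $\Lambda^{r+1}T_{p}N$ and is in particular nonzero.

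Applying this with $V=[X,X_{i}]|_{p}$ for each $p$ and each $i$ turns the pointwise membership $[X,X_{i}]\in\mathcal{D}$ into the identity $[X,X_{i}]\wedge X_{1}\wedge\cdots\wedge X_{r}=0$ of $(r+1)$-vector fields, which together with the reduction to generators above yields exactly (\ref{*}). I do not expect a genuine obstacle here; the only point requiring care is the standing hypothesis that $\dim\mathcal{D}=r$, which guarantees the pointwise independence of the frame and hence the nontriviality of $X_{1}\wedge\cdots\wedge X_{r}$. Should one wish to avoid the basis-extension argument in the converse, an alternative is to contract $V\wedge X_{1}\wedge\cdots\wedge X_{r}$ with a suitable collection of $1$-forms annihilating $\mathcal{D}$ to detect the component of $V$ transverse to $\mathcal{D}$; either route completes the equivalence.
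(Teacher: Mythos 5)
Your proof is correct. Note that there is nothing in the paper to compare it against: the lemma is stated there as a known technical fact, with only a citation to the reference \cite{Chern}, and no argument is given. Your two-step proof --- first reducing the symmetry condition $L_{X}\mathcal{D}\subset\mathcal{D}$ to the generators via the Leibniz rule $[X,f^{i}X_{i}]=f^{i}[X,X_{i}]+X(f^{i})X_{i}$, then invoking the pointwise linear-algebra fact that wedging with the nowhere-vanishing decomposable $r$-vector $X_{1}\wedge\cdots\wedge X_{r}$ vanishes exactly on $\mathcal{D}_{p}=\mathrm{span}(X_{1}|_{p},\dots,X_{r}|_{p})$ --- is the standard route and is complete; the only hypothesis it needs, pointwise independence of the $X_{i}$, is precisely the paper's standing assumption that $\mathcal{D}$ is $r$-dimensional, and you correctly flag that this is where it enters.
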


In view of this Lemma, in fact, one readily gets the following

\begin{proposition}
\label{proposition}Let $\mathcal{E}$ be defined by (\ref{ODE_scalare}),
$\Omega$ be the volume form (\ref{volume_E}) and $\omega_{i}$'s defined by
(\ref{forme_omega}). The vector fields $\{Y_{1},...,Y_{k}\}$ on $\mathcal{E}$
determine a solvable structure for $\mathcal{D}=<\overline{D}_{x}>$ iff the
following two conditions are satisfied:\newline(a) $Y_{1}\lrcorner
Y_{2}\lrcorner...\lrcorner Y_{k}\lrcorner\overline{D}_{x}\lrcorner\Omega\neq
0$;\newline(b) for any $s=0,1,...,k-1$, defining $Y_{0}=\overline{D}_{x}$, one
has%
\begin{equation}
L_{Y_{s+1}}(Y_{j})\wedge Y_{0}\wedge Y_{1}\wedge...\wedge Y_{s}=0,\qquad
\forall j=0,1,...,s. \label{Deter_eq_solv_str}%
\end{equation}

\end{proposition}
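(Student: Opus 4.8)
The plan is to show that conditions (a) and (b) together are equivalent to the two defining conditions (i) and (ii) of a solvable structure from Definition \ref{Def_solv}. Since we are working on the $(k+1)$-dimensional manifold $\mathcal{E}$ with the distribution $\mathcal{D}=<\overline{D}_{x}>$, the general setting of Definition \ref{Def_solv} applies with $n=k+1$ and the $n-1=k$ fields $\{Y_1,\dots,Y_k\}$. Thus I must verify two things: first, that condition (a) is equivalent to $\mathcal{D}_k=T\mathcal{E}$ (the frame condition (i)); and second, that condition (b) is equivalent to the invariance chain $L_{Y_{s+1}}\mathcal{D}_s\subset\mathcal{D}_s$ for all $s$ (condition (ii)), which is exactly where Lemma \ref{lemma1} will be brought to bear.

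First I would handle condition (a). The expression $Y_1\lrcorner\cdots\lrcorner Y_k\lrcorner\overline{D}_x\lrcorner\Omega$ is a contraction of the top-degree form $\Omega$ (which has degree $k+1=\dim\mathcal{E}$) by all $k+1$ of the fields $\overline{D}_x,Y_1,\dots,Y_k$, hence it is a function, namely $\Delta$ up to the ordering/sign of the insertions. The standard fact is that this function is nonzero at a point precisely when the $k+1$ fields are linearly independent there, i.e. when they form a local frame. Linear independence of $\overline{D}_x,Y_1,\dots,Y_k$ is exactly the statement $\mathcal{D}_k=<\overline{D}_x,Y_1,\dots,Y_k>=T\mathcal{E}$, so condition (a) is equivalent to (i). This step is essentially a multilinear-algebra observation and should be routine.

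The core of the argument is matching condition (b) to the invariance chain (ii). For fixed $s$, the distribution $\mathcal{D}_s=<Y_0,Y_1,\dots,Y_s>$ is spanned by the $s+1$ fields appearing in (\ref{Deter_eq_solv_str}), writing $Y_0=\overline{D}_x$. By Lemma \ref{lemma1} applied to the distribution $\mathcal{D}_s$ and the candidate symmetry $X=Y_{s+1}$, the vector field $Y_{s+1}$ is a symmetry of $\mathcal{D}_s$ if and only if $L_{Y_{s+1}}(Y_j)\wedge Y_0\wedge Y_1\wedge\cdots\wedge Y_s=0$ for every generator $Y_j$, $j=0,\dots,s$ — which is verbatim condition (b). Since $L_{Y_{s+1}}\mathcal{D}_s\subset\mathcal{D}_s$ is precisely the statement that $Y_{s+1}$ is a symmetry of $\mathcal{D}_s$, this gives the equivalence of (b) with condition (ii) of Definition \ref{Def_solv}, ranging over $s=0,\dots,k-1$ (equivalently $h=s+1=1,\dots,k$).

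Combining the two equivalences, $\{Y_1,\dots,Y_k\}$ satisfies (a) and (b) if and only if it satisfies (i) and (ii), which is exactly the definition of a solvable structure for $\mathcal{D}=<\overline{D}_x>$. The main obstacle, such as it is, lies in the bookkeeping of the invariance condition: one must be careful that Lemma \ref{lemma1} requires checking the wedge condition for \emph{all} generators $Y_j$ of $\mathcal{D}_s$ (not merely the last one added), and that the role of $\Omega$ and $\omega_i$ mentioned in the hypotheses is only to fix notation, since the equivalence of (b) with (ii) is intrinsic and does not actually depend on $\Omega$. Once the correspondence $n=k+1$, $Y_0=\overline{D}_x$ is set up cleanly, the proof reduces to two direct appeals: a rank/nonvanishing-contraction fact for (a), and Lemma \ref{lemma1} for (b).
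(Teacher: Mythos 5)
Your proof is correct and follows essentially the same route as the paper: condition (a) is matched to the frame condition $\mathcal{D}_k=T\mathcal{E}$ via the nonvanishing of the contraction of the volume form by a linearly independent set of fields, and condition (b) is matched to the invariance chain $L_{Y_{s+1}}\mathcal{D}_s\subset\mathcal{D}_s$ by applying Lemma \ref{lemma1} to each distribution $\mathcal{D}_s$ with $Y_0=\overline{D}_x$. Your write-up is in fact slightly more explicit than the paper's (which compresses part (b) into a one-line citation of Lemma \ref{lemma1} together with Proposition \ref{Main_basarab}), and you correctly observe that both directions of the equivalence must hold and that $\Omega$ plays no essential role in part (b).
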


\begin{proof}
(a) By using the notations of Definition \ref{Def_solv}, one has that
$\mathcal{D}_{k}=T\mathcal{E}$. Hence the vector fields $Y_{1},...,Y_{k}%
,\overline{D}_{x}$ are linearly independent and the conclusion
follows.\newline(b) The result follows by Proposition \ref{Main_basarab} and
(\ref{*}).
\end{proof}

%%%%%%%%%%%%%%%%%%%%%%%%%%%%%%%%%%%%%%%%%%%%%%%%%%%%%%%

\begin{remark}
By using the volume form $\Omega$, conditions (\ref{Deter_eq_solv_str}) can
also be rewritten in the alternative form%
\begin{equation}
L_{Y_{s+1}}(Y_{j})\lrcorner\left(  Y_{0}\lrcorner Y_{1}\lrcorner...\lrcorner
Y_{s}\lrcorner\Omega\right)  =0,\qquad\forall j=0,1,...,s.
\label{deter_solv_str_*}%
\end{equation}

\end{remark}

Equations (\ref{Deter_eq_solv_str}) (or (\ref{deter_solv_str_*})) are the
determining equations for the solvable structures $\{Y_{1},...,Y_{k}\}$ of
$\mathcal{D}=<\overline{D}_{x}>$. If one writes down all these equations in
coordinate form, it turns out that these form a system of
\[
\sum_{s=0}^{k-1}(s+1)
%TCIMACRO{\QATOP{k+1}{s+2}}%
%BeginExpansion
{k+1 \choose s+2}%
%EndExpansion
 =1-2^{k+1}+(k+1)2^{k}%
\]
nonlinear first order PDEs in the $(k+1)k$ unknown components of the fields
$Y_{1},...,Y_{k}$. It is underdetermined when $k=1,2$ and overdetermined when
$k\geq3$. However, this system has a very special form. In fact, for $s=0$
equation (\ref{Deter_eq_solv_str}) gives a system $S_{0}$ of equations
involving only the field $Y_{1}$. Then, after determining a solution $Y_{1}$
and substituting in (\ref{Deter_eq_solv_str}), for $s=1$ one gets another
system $S_{1}$ of equations only for $Y_{2}$. It follows that, iterating this
procedure, the determination of solvable structures just follows by the
subsequent analysis of the linear systems $S_{0},...,S_{k-1}$.

\begin{remark}
In view of the definition of solvable structure, any solution $Y_{1}$ of
$S_{0}$ is a symmetry of the vector field $\overline{D}_{x}$ and hence it is a
symmetry of $\mathcal{E}\subset J^{k}(\pi)$. Nevertheless, it does not mean
that it is a symmetry of the contact distribution $\mathcal{C}^{k}$ on
$J^{k}(\pi)$. In fact $Y_{1}$ could also be an internal symmetry of
$\mathcal{E}$. For the details on possible relations between external,
internal and generalized symmetries the reader is referred to \cite{AKO}.
\end{remark}

In view of Remark \ref{Remark}, the determining equations for solvable
structures always admit local solutions even though they could be hardly
solvable in practice. However, a noteworthy simplification may occur when one
only knows $\{Y_{1},...,Y_{r}\}$, $r\leq k-1$, and a complete system of joint
invariants $\{\gamma_{1},...,\gamma_{k-r}\}$ for the distribution
$\mathcal{D}_{r}$. In fact, using the same notations of Definition
\ref{Def_solv} and Proposition \ref{Main_basarab} one has the following

\begin{proposition}
\label{prop_invar_cong}Let $\mathcal{D}=<Z>$ be a $1$-dimensional distribution
and $\{Y_{1},...,Y_{r}\}$, $r\leq k-1$, be such that $L_{Y_{h}}\mathcal{D}%
_{h-1}\subset\mathcal{D}_{h-1}$ for any $h\in\{1,...,r\}$. If one knows a
complete system of joint invariants $\{\gamma_{1},...,\gamma_{k-r-1}\}$ for
the distribution $\mathcal{D}_{r}$, then $\mathcal{D}$ is completely
integrable by quadratures.
\end{proposition}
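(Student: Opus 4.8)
The plan is to reduce the claim to Proposition \ref{Main_basarab} by restricting everything to the integral leaves of $\mathcal{D}_r$. First I would check that $\mathcal{D}_r=\langle Z,Y_1,\dots,Y_r\rangle$ is completely integrable. Setting $Y_0=Z$ and using the inclusions $\mathcal{D}_0\subset\mathcal{D}_1\subset\dots\subset\mathcal{D}_r$, the hypothesis $L_{Y_h}\mathcal{D}_{h-1}\subset\mathcal{D}_{h-1}$ gives, for all $i<j\le r$, that $[Y_j,Y_i]\in\mathcal{D}_{j-1}\subset\mathcal{D}_r$ and likewise $[Y_j,Z]\in\mathcal{D}_{j-1}\subset\mathcal{D}_r$; hence $\mathcal{D}_r$ is involutive and Frobenius applies. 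Since the ambient manifold $N$ here has dimension $k$ and $Z,Y_1,\dots,Y_r$ are independent, $\mathcal{D}_r$ has dimension $r+1$ and codimension $k-r-1$. Because $\{\gamma_1,\dots,\gamma_{k-r-1}\}$ is a \emph{complete} system of joint invariants of $\mathcal{D}_r$, their common level sets are precisely the $(r+1)$-dimensional integral leaves of $\mathcal{D}_r$, and on each such leaf $L$ the fields $Z,Y_1,\dots,Y_r$ are tangent and span $TL$.

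The key observation is that, once restricted to a single leaf $L$, the family $\{Y_1|_L,\dots,Y_r|_L\}$ is a \emph{full} solvable structure for the $1$-dimensional distribution $\langle Z|_L\rangle$ on $L$. Indeed $\dim L=r+1$, so condition (i) of Definition \ref{Def_solv} reduces to $\langle Z,Y_1,\dots,Y_r\rangle|_L=TL$, which holds by construction, while condition (ii) is inherited verbatim from the ambient hypothesis, since every $\mathcal{D}_{h-1}$ with $h\le r$ is tangent to $L$. Choosing a volume form on $L$ and applying Proposition \ref{Main_basarab} to the pair $(L,\langle Z|_L\rangle)$ then produces, by quadratures alone, the associated closed-modulo-higher $1$-forms and a complete set of $r$ first integrals $I_1,\dots,I_r$ of $Z|_L$.

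It remains to let the leaf vary and reassemble the pieces over $N$. The functions $\gamma_j$ are in particular invariants of $Z\in\mathcal{D}_r$, hence first integrals of $\mathcal{D}=\langle Z\rangle$ on all of $N$; together with the $r$ leafwise integrals $I_1,\dots,I_r$, computed treating the $\gamma_j$ as constants, they furnish $(k-r-1)+r=k-1=\dim N-1$ functionally independent first integrals of $Z$. Their common level sets give the integral curves of $\mathcal{D}$ in implicit form, so $\mathcal{D}$ is completely integrable by quadratures.

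The step I expect to be the main obstacle is precisely this last reassembly. One must verify that the leafwise construction depends smoothly on the parameters $(\gamma_1,\dots,\gamma_{k-r-1})$, so that the $I_i$ are genuine smooth functions on $N$ rather than functions on individual leaves, and, equivalently, that when the leafwise forms are regarded as $1$-forms on $N$ their exterior derivatives lie in the ideal generated by $\{d\gamma_1,\dots,d\gamma_{k-r-1}\}$ together with the higher forms, so that each stage of the quadrature scheme of Proposition \ref{Main_basarab} still closes up. This is exactly where the \emph{completeness} of the joint-invariant system and the nested integrability of the flag $\mathcal{D}_0\subset\dots\subset\mathcal{D}_r$ must be used together; everything else is either Frobenius or a direct leafwise appeal to Proposition \ref{Main_basarab}.
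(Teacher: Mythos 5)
Your first two steps (involutivity of $\mathcal{D}_r$ via the flag hypothesis, and the observation that on each leaf $L$ the restricted family $\{Y_1|_L,\dots,Y_r|_L\}$ is a full solvable structure for $\langle Z|_L\rangle$) are sound, but the proof as written has a genuine gap, and it is exactly the one you flag at the end without closing: the reassembly of the leafwise quadratures into smooth functions on $N$. Applying Proposition \ref{Main_basarab} separately on each leaf produces, via the Poincar\'e lemma, primitives $I_1,\dots,I_r$ that are defined only leaf by leaf; nothing in your argument controls how the constants of integration (and hence the primitives themselves) vary as the leaf varies, so you have not produced smooth first integrals on $N$, nor shown that they are obtained by quadratures performed on $N$. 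Saying that ``the completeness of the invariants and the nested integrability must be used together'' names the difficulty, but it is not an argument; as it stands, the conclusion ``completely integrable by quadratures'' is not reached.

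The missing idea, which is how the paper proceeds, is to never restrict to a leaf at all: work in a local chart $\{\gamma_i,g_j\}$ adapted to the foliation, so that $\mathcal{D}_r=Ann\{d\gamma_1,\dots,d\gamma_{k-r-1}\}$, and adjoin the transverse coordinate fields $\partial_{\gamma_1},\dots,\partial_{\gamma_{k-r-1}}$ to your family. Since $L_{\partial_{\gamma_i}}(d\gamma_j)=0$, each $\partial_{\gamma_{i+1}}$ is a symmetry of $\mathcal{D}_{r+i}=Ann\{d\gamma_{i+1},\dots,d\gamma_{k-r-1}\}$, and completeness of the invariant system gives $\mathcal{D}_{k-1}=TN$; hence the ordered family $\{Y_1,\dots,Y_r,\partial_{\gamma_1},\dots,\partial_{\gamma_{k-r-1}}\}$ satisfies both conditions of Definition \ref{Def_solv} and is a \emph{full} solvable structure for $\mathcal{D}=\langle Z\rangle$ on $N$. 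A single application of Proposition \ref{Main_basarab} on $N$ then produces forms $\omega_i$ that are smooth on all of $N$ and integrable by quadratures there: the forms attached to the adjoined fields are (up to normalization) the exact forms $d\gamma_i$, and the remaining quadratures are precisely your leafwise ones, now carried out on $N$ with smooth dependence on $(\gamma_1,\dots,\gamma_{k-r-1})$ built in automatically. Replacing your leafwise step by this extension argument closes the gap and the rest of your outline goes through.
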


\begin{proof}
Let $\{\gamma_{i},g_{j}\}$ be a local coordinate chart adapted to the leaves
of the distribution $\mathcal{D}_{r}=Ann\{d\gamma_{1},...,d\gamma_{k-r-1}\}$.
It turns out that any field $\partial_{\gamma_{i}}$ is a symmetry of
$\mathcal{D}_{r}$, for $L_{\partial_{\gamma_{i}}}(d\gamma_{j})=0$, and hence
$\{Y_{1},...,Y_{r},\partial_{\gamma_{1}},...,\partial_{\gamma_{k-r-1}}\}$ form
a solvable structure. The conclusion follows by a direct application of
Proposition \ref{Main_basarab}.
\end{proof}

It is clear that, since the joint invariants $F\in C^{\infty}(\mathcal{E})$
for $\mathcal{D}_{r}$ are described by the overdetermined system%
\[
\overline{D}_{x}(F)=Y_{i}(F)=0,\qquad\forall i=1,...,r,
\]
the more fields $Y_{i}$ one knows, the more it is feasible to find a
fundamental set of invariants $\{\gamma_{1},...,\gamma_{k-r}\}$. In fact, the
analysis of differential consequences of such an overdetermined system of PDEs
often lead to the determination of the general solution.

Above facts, in particular, apply to the case when $\mathcal{E}$ has an
$r$-dimensional solvable algebra\footnote{We denote by $<Y_{1},...,Y_{r}%
>_{\mathbb{R}}$the $\mathbb{R}$-linear span of the fields $Y_{1},...,Y_{r}$.}
$\mathcal{G}=<Y_{1},...,Y_{r}>_{\mathbb{R}}$ of local classical symmetries. In
such cases, the problem of finding solvable structures which extends
$\mathcal{G}$ is much easier and feasible if $r$ is sufficiently large. A
first simple implementation of this idea, even though with a different
approach, has been already given in the paper \cite{HaAt}. In fact, in that
paper it is shown how such a kind of completion together with the results of
\cite{Ba} may lead to the integration of a second order ODE whose Lie algebra
of point symmetries is $1$-dimensional.

However there is no real need to limit ourselves to classical local
symmetries. In fact, in view of above discussion, the determination of
solvable structures can be noteworthy simplified if one preliminarily enlarges
$\mathcal{G}$ by adjoining as many symmetries as possible. Moreover, as
already observed in the previous sections, one often runs into ODEs with a
lack of local symmetries. Hence, in such cases, to overcome the occurring
difficulties in the reduction procedure one could also take advantage of the
existence of nonlocal symmetries.

However, as discussed above, nonlocal symmetries of $\mathcal{E}$ can only be
taken into account by preliminarily embedding $\mathcal{E}$ in an auxiliary
system $\mathcal{E}^{\prime}$, of the form (\ref{covering}), and then
searching for symmetries (higher or classical) of $\mathcal{E}^{\prime}$. As a
consequence, the nonlocal symmetry reduction of $\mathcal{E}$ comes from the
local symmetry reduction of $\mathcal{E}^{\prime}$. By applying Proposition
\ref{Main_basarab} above to the distribution $<\widetilde{D}_{x}>$ on
$\mathcal{E}^{\prime}$, one readily gets the following analogue of Proposition
\ref{Main_ODE}:

\begin{proposition}
Let $\{Y_{1},...,Y_{k+1}\}$ be a solvable structure for the $1$-dimensional
distribution $\widetilde{\mathcal{D}}=<\widetilde{D}_{x}>$ on the covering
system $\mathcal{E}^{\prime}$ of a $k$-order ODE $\mathcal{E}$. Then the
distribution $\widetilde{\mathcal{D}}$ is integrable by quadratures and the
general solution of $\mathcal{E}^{\prime}$ can be obtained in implicit form by
subsequently integrating the system of one forms $\omega_{k+1},...,\omega_{1}%
$, in the given order.
\end{proposition}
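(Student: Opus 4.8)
The plan is to obtain this statement as a direct transcription of Proposition \ref{Main_basarab} to the covering, exactly as Proposition \ref{Main_ODE} was derived in the uncovered case. First I would fix the geometric setting. The covering system $\mathcal{E}^{\prime}$ defined by (\ref{covering}) carries the natural coordinates $(x,u,u_{1},...,u_{k-1},w)$, so as a working manifold it has dimension $k+2$; and since by construction $\kappa$ is the trivial bundle $\mathcal{E}^{(\infty)}\times\mathbb{R}\rightarrow\mathcal{E}^{(\infty)}$, $\mathcal{E}^{\prime}$ is orientable, equipped with the natural volume form
\[
\widetilde{\Omega}=dx\wedge du\wedge du_{1}\wedge...\wedge du_{k-1}\wedge dw=\Omega\wedge dw,
\]
$\Omega$ being the form (\ref{volume_E}). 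On this manifold $\widetilde{\mathcal{D}}=<\widetilde{D}_{x}>$ is $1$-dimensional, with $\widetilde{D}_{x}=\overline{D}_{x}+H\partial_{w}$ as in (\ref{D_widetilde}). The decisive bookkeeping point is that a solvable structure for a $1$-dimensional distribution on an $n$-dimensional manifold consists of $n-1$ fields; here $n=k+2$, so $\{Y_{1},...,Y_{k+1}\}$ supplies precisely the required $k+1=n-1$ fields.

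With the setting in place, I would invoke Proposition \ref{Main_basarab} verbatim, taking $N=\mathcal{E}^{\prime}$, $Z=\widetilde{D}_{x}$ and the volume form $\widetilde{\Omega}$. This produces the $1$-forms $\omega_{1},...,\omega_{k+1}$ via formula (\ref{forme_omega}) (now with $n=k+2$), whose annihilator is $\widetilde{\mathcal{D}}$, together with the closedness relations
\[
d\omega_{k+1}=0,\qquad d\omega_{i}=0\hspace{0.1in}mod\{\omega_{i+1},...,\omega_{k+1}\},\quad i=1,...,k.
\]
No new computation is needed at this step: the hypotheses of Proposition \ref{Main_basarab} (orientability, and a genuine solvable structure, so that the function $\Delta$ defined there is nonvanishing) are exactly what the first step establishes.

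Finally I would run the quadrature procedure already spelled out after Proposition \ref{Main_basarab}. Since $d\omega_{k+1}=0$, locally $\omega_{k+1}=dI_{k+1}$ for some first integral $I_{k+1}$; because $d\omega_{k}=0$ modulo $\omega_{k+1}$, the form $\omega_{k}$ is closed on each level manifold $\{I_{k+1}=c_{k+1}\}$ and hence equals $dI_{k}$ there after one quadrature; iterating downward through $\omega_{k-1},...,\omega_{1}$ yields integrals $I_{1},...,I_{k+1}$. The (local) integral manifolds of $\widetilde{\mathcal{D}}$, equivalently the general solution of $\mathcal{E}^{\prime}$, are then obtained in implicit form as $\{I_{1}=c_{1},...,I_{k+1}=c_{k+1}\}$ by integrating $\omega_{k+1},...,\omega_{1}$ in this order.

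I do not expect a genuine obstacle, since the statement is essentially a relabelling of Proposition \ref{Main_basarab}/\ref{Main_ODE} with $\mathcal{E}$ replaced by its covering $\mathcal{E}^{\prime}$ and $\overline{D}_{x}$ by $\widetilde{D}_{x}$. The only point requiring a line of care is the verification that $\mathcal{E}^{\prime}$ is orientable and of dimension $k+2$, so that the field count $k+1=n-1$ is correct and the hypotheses of Proposition \ref{Main_basarab} are met; both facts are immediate from the trivial-bundle form of the covering in (\ref{covering}).
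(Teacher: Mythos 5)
Your proposal is correct and follows exactly the paper's own route: the paper states this proposition as an immediate consequence of Proposition \ref{Main_basarab} applied to $\widetilde{\mathcal{D}}=<\widetilde{D}_{x}>$ on the $(k+2)$-dimensional manifold $\mathcal{E}^{\prime}$ with the volume form $\widetilde{\Omega}=dx\wedge du\wedge...\wedge du_{k-1}\wedge dw$, followed by the standard quadrature cascade described after Proposition \ref{Main_basarab}. Your explicit verification of the dimension count ($n=k+2$, hence $k+1$ fields) and of the orientability of the trivial covering is precisely the bookkeeping the paper leaves implicit in its phrase ``one readily gets.''
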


Moreover, since any covering system $\mathcal{E}^{\prime}$ of the form
(\ref{covering}) is naturally equipped with the volume form
\[
\widetilde{\Omega}=dx\wedge du\wedge...\wedge du_{k-1}\wedge dw,
\]
one also has the following analogue of Proposition \ref{proposition}:

\begin{proposition}
The vector fields $\{Y_{1},...,Y_{k+1}\}$ on $\mathcal{E}^{\prime}$ determine
a solvable structure for $\widetilde{\mathcal{D}}=<\widetilde{D}_{x}>$ iff the
following two conditions are satisfied:\newline(a) $Y_{1}\lrcorner
Y_{2}\lrcorner...\lrcorner Y_{k+1}\lrcorner\widetilde{D}_{x}\lrcorner
\widetilde{\Omega}\neq0$;\newline(b) for any $s=0,1,...,k$, defining
$Y_{0}=\widetilde{D}_{x}$, one has%
\[
L_{Y_{s+1}}(Y_{j})\wedge Y_{0}\wedge Y_{1}\wedge...\wedge Y_{s}=0,\qquad
\forall j=0,1,...,s.
\]

\end{proposition}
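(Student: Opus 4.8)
The plan is to mirror the proof of Proposition \ref{proposition}, transporting it to the covering manifold. The first step is the dimension bookkeeping: the covering system $\mathcal{E}^{\prime}$ defined by (\ref{covering}) carries the coordinates $(x,u,u_{1},...,u_{k-1},w)$, so it is a manifold of dimension $k+2$. Hence, in the language of Definition \ref{Def_solv}, we are in the case $n=k+2$ and $n-1=k+1$, which is exactly the number of fields $\{Y_{1},...,Y_{k+1}\}$ prescribed; moreover $\widetilde{\Omega}$ is the associated volume form and the interior product appearing in (a) is a genuine $0$-form, which confirms that the degrees match.

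For the part corresponding to condition (i) of Definition \ref{Def_solv}, I would observe that requiring $\mathcal{D}_{k+1}=<\widetilde{D}_{x},Y_{1},...,Y_{k+1}>=T\mathcal{E}^{\prime}$ is equivalent to the pointwise linear independence of the $k+2$ fields $\widetilde{D}_{x},Y_{1},...,Y_{k+1}$. Since $\widetilde{\Omega}$ is nowhere vanishing, this independence holds precisely when $Y_{1}\lrcorner...\lrcorner Y_{k+1}\lrcorner\widetilde{D}_{x}\lrcorner\widetilde{\Omega}\neq 0$, i.e. condition (a). This is the direct analogue of part (a) in the proof of Proposition \ref{proposition}, with $\overline{D}_{x}$ replaced by $\widetilde{D}_{x}$ and $\Omega$ by $\widetilde{\Omega}$.

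For condition (ii), I would read it, for each $h\in\{1,...,k+1\}$, as the statement that $Y_{h}$ is a symmetry of the distribution $\mathcal{D}_{h-1}=<\widetilde{D}_{x},Y_{1},...,Y_{h-1}>$. Setting $s=h-1$ and $Y_{0}=\widetilde{D}_{x}$, the distribution $\mathcal{D}_{s}$ is spanned by the $s+1$ fields $Y_{0},Y_{1},...,Y_{s}$, so Lemma \ref{lemma1} applies with $X=Y_{s+1}$ and yields that $Y_{s+1}$ is a symmetry of $\mathcal{D}_{s}$ if and only if $L_{Y_{s+1}}(Y_{j})\wedge Y_{0}\wedge Y_{1}\wedge...\wedge Y_{s}=0$ for all $j=0,1,...,s$. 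Letting $s$ range over $0,...,k$ reproduces exactly the family of conditions listed in (b), which establishes the equivalence.

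Since both (a) and (b) are transcriptions of conditions (i) and (ii) of Definition \ref{Def_solv} via the volume-form criterion and Lemma \ref{lemma1}, I do not anticipate any genuine obstacle; the only point demanding care is the dimension count, namely checking that $k+1$ fields are the right number on the $(k+2)$-dimensional covering and that all wedge and interior products carry the correct degrees. Once this is pinned down the argument is purely formal and structurally identical to the proof of Proposition \ref{proposition}.
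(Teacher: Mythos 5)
Your proof is correct and follows essentially the same route as the paper: the paper states this proposition without separate proof as the direct analogue of Proposition \ref{proposition}, whose proof is exactly what you transcribe — condition (a) from the linear independence of the $k+2$ fields on the $(k+2)$-dimensional manifold $\mathcal{E}^{\prime}$ detected by the nowhere-vanishing volume form $\widetilde{\Omega}$, and condition (b) from the symmetry characterization of Lemma \ref{lemma1} applied to each $\mathcal{D}_{s}=\left\langle Y_{0},Y_{1},...,Y_{s}\right\rangle$. Your explicit dimension bookkeeping on the covering is a useful clarification but does not change the argument.
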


Unfortunately, if one considers a general covering system $\mathcal{E}%
^{\prime}$, it is not true in general that $\mathcal{E}^{\prime}$ inherits the
local symmetries of $\mathcal{E}$. Hence, if for example $\mathcal{E}$ has a
solvable algebra $\mathcal{G}$ of local symmetries, by considering a generic
covering $\mathcal{E}^{\prime}$ one not only may lose the local symmetries
belonging to $\mathcal{G}$ but also acquires one more dimension. It follows
that, in this case, one should also determine a solvable structure with one
more vector field.

In particular, when working with solvable symmetry algebras, above facts raise
the question of weather $\mathcal{E}^{\prime}$ can be chosen in such a way
that $\mathcal{G}$ is inherited and the dimensional growth is compensated by
the presence of some new nonlocal symmetry. An answer to this question is
provided by the following

\begin{proposition}
Let $\mathcal{E}$ be a $k$-order ODE of the form (\ref{ODE_scalare}) which
admits an algebra $\mathcal{G}$ of local symmetries. The algebra $\mathcal{G}$
is inherited by a covering system $\mathcal{E}^{\prime}$, of the form
(\ref{covering}), iff $H$ is a joint scalar invariant of $\mathcal{G}$. In
particular, if $\partial_{w}H=0$, $\mathcal{G}$ can be extended to the Lie
algebra $\widehat{\mathcal{G}}=\mathcal{G}\oplus\left\langle \partial
_{w}\right\rangle _{\mathbb{R}}$. If in addition $\mathcal{G}$ is solvable,
then also $\widehat{\mathcal{G}}$ is solvable.
\end{proposition}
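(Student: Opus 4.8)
The plan is to turn ``inheritance'' of a symmetry into a Lie-bracket condition on the covering and then read off the invariance of $H$ directly. I regard the elements of $\mathcal{G}$ as local symmetries written in evolutive form and restricted to $\mathcal{E}^{(\infty)}$; the decisive feature of this form is that such a field $Y$ commutes \emph{exactly} with the total derivative, $[Y,\overline{D}_{x}]=0$, rather than merely stabilizing the line $\langle\overline{D}_{x}\rangle$. I say that $Y$ is inherited by $\mathcal{E}^{\prime}=\mathcal{E}^{(\infty)}\times\mathbb{R}$ if its trivial lift $\widetilde{Y}$ (the same components on $x,u_{i}$ and $\widetilde{Y}(w)=0$) is a nonlocal symmetry, i.e.\ a symmetry of $\widetilde{\mathcal{C}}=\langle\widetilde{D}_{x}\rangle$.

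First I would compute $[\widetilde{Y},\widetilde{D}_{x}]$ from $\widetilde{D}_{x}=\overline{D}_{x}+H\partial_{w}$. Since the components of $\widetilde{Y}$ do not depend on $w$ one has $[\widetilde{Y},\partial_{w}]=0$, and since $[\widetilde{Y},\overline{D}_{x}]=[Y,\overline{D}_{x}]=0$ the only surviving term is $[\widetilde{Y},H\partial_{w}]=Y(H)\,\partial_{w}$; hence $[\widetilde{Y},\widetilde{D}_{x}]=Y(H)\,\partial_{w}$. Requiring this to lie in $\langle\widetilde{D}_{x}\rangle=\langle\overline{D}_{x}+H\partial_{w}\rangle$ forces the proportionality factor to vanish, because the left-hand side has no $\partial_{x}$ component while $\overline{D}_{x}$ does; thus $\widetilde{Y}$ is a symmetry of $\widetilde{D}_{x}$ if and only if $Y(H)=0$. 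Running this over all $Y\in\mathcal{G}$ gives the equivalence: $\mathcal{G}$ is inherited iff $Y(H)=0$ for every $Y\in\mathcal{G}$, i.e.\ iff $H$ is a joint scalar invariant of $\mathcal{G}$. This settles the iff part.

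Next, under the extra hypothesis $\partial_{w}H=0$, I would check that $\partial_{w}$ is itself a nonlocal symmetry: $[\partial_{w},\widetilde{D}_{x}]=[\partial_{w},\overline{D}_{x}]+(\partial_{w}H)\,\partial_{w}=0$, since $\overline{D}_{x}$ is $w$-independent and $\partial_{w}H=0$. Because every $Y\in\mathcal{G}$ has $w$-independent components, $[\partial_{w},\widetilde{Y}]=0$, so the span $\widehat{\mathcal{G}}=\mathcal{G}\oplus\langle\partial_{w}\rangle_{\mathbb{R}}$ is closed under the bracket (it only adjoins the commuting generator $\partial_{w}$) and consists entirely of symmetries of $\widetilde{D}_{x}$; it is therefore a Lie algebra of nonlocal symmetries extending $\mathcal{G}$, with $\langle\partial_{w}\rangle_{\mathbb{R}}$ a central ideal.

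Finally, for solvability I would exploit that $\partial_{w}$ is central: then $[\widehat{\mathcal{G}},\widehat{\mathcal{G}}]=[\mathcal{G},\mathcal{G}]$, and by induction the derived series satisfies $\widehat{\mathcal{G}}^{(m)}=\mathcal{G}^{(m)}$ for all $m\geq1$; if $\mathcal{G}$ is solvable this terminates at $0$, whence $\widehat{\mathcal{G}}$ is solvable (equivalently, $\widehat{\mathcal{G}}$ is the direct sum of the solvable algebra $\mathcal{G}$ and the abelian line $\langle\partial_{w}\rangle_{\mathbb{R}}$). The single point needing care---and the main conceptual obstacle rather than a computational one---is the identity $[Y,\overline{D}_{x}]=0$: for a symmetry written in generic (non-evolutive) form one has only $[Y,\overline{D}_{x}]=\mu\,\overline{D}_{x}$, and the inheritance condition then weakens to $Y(H)=\mu H$ instead of $Y(H)=0$. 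Fixing the evolutive representative (equivalently, choosing the lift with $\widetilde{Y}(w)=0$) is precisely what makes the clean criterion ``$H$ a joint invariant'' correct, so I would make this normalization explicit before computing any brackets.
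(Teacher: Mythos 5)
Your proof is correct and follows essentially the same route as the paper: the paper likewise takes the elements of $\mathcal{G}$ in evolutive form, imposes invariance of $\mathcal{E}^{\prime}$ under the trivially lifted fields (the condition $X(w_{1}-H)=-X(H)=0$, which is exactly what your bracket computation $[\widetilde{Y},\widetilde{D}_{x}]=Y(H)\,\partial_{w}$ expresses intrinsically), and concludes the extension/solvability claims from the fact that $\partial_{w}$ commutes with all fields of $\mathcal{G}$. Your explicit derived-series argument and your closing remark on why the evolutive normalization is essential are just slightly more detailed versions of steps the paper leaves implicit.
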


\begin{proof}
Each $X\in\mathcal{G}$ is a vector field tangent to $\mathcal{E}$ of the form
$X=\overline{D}_{x}^{i}(\psi)\partial_{u_{i}}$. The proof of the first part
readily follows by imposing the invariance of $\mathcal{E}^{\prime}$under the
vector fields of $\mathcal{G}$. In fact, since each $X\in\mathcal{G}$ is such
that $\left.  X(u_{k}-f)\right|  _{\mathcal{E}^{(\infty)}}=0$, one readily
gets that $X(w_{1}-H)=-X(H)=0$. Hence, if $H$ is a function of the joint
scalar invariants of $\mathcal{G}$ and $\partial_{w}H=0$, one gains also the
new symmetry $\partial_{w}$. The thesis follows by observing that
$\partial_{w}$ commute with all the vector fields of $\mathcal{G}$.
\end{proof}

However, when $\partial_{w}H=0$, the advantage of solvable structures on
$\mathcal{E}^{\prime}$ is that one does not necessarily need that the symmetry
algebra $\mathcal{G}$ is inherited by $\mathcal{E}^{\prime}$. In fact, since
any $X\in\mathcal{G}$ is such that%
\[
\lbrack X,\overline{D}_{x}]=\alpha\overline{D}_{x}%
\]
for some $\alpha$, then one readily gets that $X$ is a symmetry of the
distribution $<\widetilde{D}_{x},\partial_{w}>$. Hence, if $\mathcal{G}%
=<Y_{1},Y_{2},...,Y_{r}>_{\mathbb{R}}$, one could search just for solvable
structures for $<\widetilde{D}_{x}>$ which extend $\{\partial_{w},Y_{1}%
,Y_{2},...,Y_{r}\}$ and possibly include nonlocal symmetries.

In particular all above facts apply to the special case when $\mathcal{E}%
^{\prime}$ is a $\lambda$-covering system for $\mathcal{E}$. In this case, if
$Y$ is a nonlocal symmetry corresponding to a $\lambda$-symmetry, one could
search for solvable structures which extend for example the $2$-dimensional
algebra spanned by $\partial_{w}$ and $Y$.

\section{Examples}

The following examples provide an insight into the applications of above
reduction scheme. All the examples make use of solvable structures adapted to
symmetries of the given ODE. In particular, since in a solvable structure
there is no need to distinguish between symmetries and other vector fields,
all these vector fields will be treated on the same footing. \bigskip
\newline\textbf{Example 1.} Consider the ODE
\begin{equation}
u_{3}=\frac{u_{2}^{2}(u_{1}^{2}-2uu_{2})}{u_{1}^{4}}. \label{ex1}%
\end{equation}
The algebra of classical symmetries $\mathcal{G}_{0}$ of (\ref{ex1}) is
$2$-dimensional and generated by the functions
\[%
\begin{array}
[c]{l}%
\varphi_{1}=xu_{1}-u,\\
\varphi_{2}=u_{1}.
\end{array}
\]
Therefore classical symmetries do not suffice for the complete symmetry
reduction of (\ref{ex1}).

\begin{remark}
By reducing equation (\ref{ex1}) through the symmetries $X_{2},X_{1}$ generated by
$\varphi_{2}$ and $\varphi_{1}$, in the given order, one first obtains the
reduced equation $U_{\xi\xi}=(-U^{-3}+U^{-2})U_{\xi}^{2}-2\xi U^{-3}U_{\xi
}^{3}$ and subsequently the Riccati equation $V_{\eta}=1-2\eta^{-3}V^{2}%
-\eta^{-1}V+\eta^{-2}V$. Here by $U=u_{1},\xi=u$ we denote the basis
invariants of $X_{2}$ and analogously by $V=\xi U_{\xi}$ and $\eta=U$ the
basis invariants of $X_{1}$. Above Riccati equation can be explicitly
integrated even though it is not possible to exhibit an explicit solution of
the initial equation (\ref{ex1}). Below we show an alternative integration of
(\ref{ex1}) by means of a solvable structure.
\end{remark}

However, one can also search for higher symmetries. In this case, one can find
that up to first order the algebra of symmetries is $\mathcal{G}_{0}%
\oplus\mathcal{G}_{1}$ where $\mathcal{G}_{1}$ is a $3$-dimensional algebra
generated by the functions
\[%
\begin{array}
[c]{l}%
\varphi_{3}=xu_{1}^{2}-2uu_{1},\\
\varphi_{4}=u_{1}^{2},\\
\varphi_{5}=e^{1/u_{1}}u_{1}^{2}.
\end{array}
\]
Let us denote by $X_{i}$ the symmetries generated by the functions
$\varphi_{i}$ above. We have that the only nonvanishing commutators are
\[%
\begin{array}
[c]{l}%
\lbrack X_{1},X_{2}]=X_{2},\qquad\lbrack X_{1},X_{4}]=X_{4},\qquad\lbrack
X_{1},X_{5}]=X_{5},\\
\lbrack X_{2},X_{3}]=-X_{4},\qquad\lbrack X_{3},X_{5}]=X_{5}.
\end{array}
\]
Therefore, $X_{1},X_{2}$ and $X_{4}$ span a solvable algebra which can be used
to completely reduce (\ref{ex1}). This special example shows the advantage of
considering also higher order symmetries in the reduction of ODE.\newline Here
we will apply the solvable algebra $\{Y_{1}=X_{4},Y_{2}=X_{2},Y_{3}=X_{1}\}$
in the integration of (\ref{ex1}) through the subsequent integration of the
forms $\omega_{3},\omega_{2}$ and $\omega_{1}$.\newline Now, by making use of
(\ref{forme_omega}) and coordinate expressions of the fields $Y_{i}$ and $Z=\overline{D}_{x}$, one gets
$\Delta=(u_{1}^{2}-2uu_{2})u_{2}^{2}$ and%
\[%
\begin{array}
[c]{l}%
\omega_{3}=\dfrac{2u_{2}}{u_{1}^{2}-2uu_{2}}du+\dfrac{2uu_{2}-u_{1}^{2}%
-2u_{1}^{3}}{(u_{1}^{2}-2uu_{2})u_{1}^{2}}du_{1}\vspace{0.08in}\\
\hspace{0.55in}+\dfrac{u_{1}^{2}}{(u_{1}^{2}-2uu_{2})u_{2}}du_{2}%
,\vspace{0.08in}\\
\omega_{2}=dx+\dfrac{2xu_{2}-2u_{1}}{u_{1}^{2}-2uu_{2}}du+\dfrac
{(xu_{1}-2u)u_{1}}{(u_{1}^{2}-2uu_{2})u_{2}}du_{2}\vspace{0.08in}\\
\hspace{0.55in}+\dfrac{(2xuu_{1}-4u^{2})u_{2}^{2}+(-2xu_{1}^{4}+(2u-x)u_{1}%
^{3}+2uu_{1}^{2})u_{2}+u_{1}^{5}}{(u_{1}^{2}-2uu_{2})u_{2}u_{1}^{3}}%
du_{1},\vspace{0.08in}\\
\omega_{1}=-\dfrac{1}{u_{1}^{2}-2uu_{2}}du+\dfrac{u_{1}^{5}+uu_{1}^{2}%
u_{2}-2u^{2}u_{2}^{2}}{(u_{1}^{2}-2uu_{2})u_{1}^{4}u_{2}}du_{1}\vspace
{0.08in}\\
\hspace{0.55in}-\dfrac{u}{(u_{1}^{2}-2uu_{2})u_{2}}du_{2}.
\end{array}
\]
The $1$-form $\omega_{3}$ is exact and one can check that $\omega_{3}=dI_{3}$
with%
\[
I_{3}=\frac{1}{u_{1}}+\ln\left(  \frac{u_{2}}{2uu_{2}-u_{1}^{2}}\right)  .
\]
Then, by restricting $\omega_{2}$ and $\omega_{1}$ on the integral manifolds
$\Sigma_{3}:=\{I_{3}=c_{3}\}$, one finds%
\[%
\begin{array}
[c]{l}%
\left.  \omega_{2}\right\vert _{\Sigma_{3}}=dx+\dfrac{e^{\frac{1+u_{1}%
\ln(u_{2})-c_{3}u_{1}}{u_{1}}}-u_{1}^{3}}{u_{1}^{3}u_{2}}du_{1}+\dfrac{u_{1}%
}{u_{2}^{2}}du_{2},\vspace{0.08in}\\
\left.  \omega_{1}\right\vert _{\Sigma_{3}}=\dfrac{e^{\frac{1+u_{1}\ln
(u_{2})-c_{3}u_{1}}{u_{1}}}}{2u_{1}^{4}u_{2}}du_{1}+\dfrac{1}{2u_{2}^{2}%
}du_{2}.
\end{array}
\]
In this case both $\left.  \omega_{2}\right\vert _{\Sigma_{3}}$ and $\left.
\omega_{1}\right\vert _{\Sigma_{3}}$ are exact $1$-forms. This is due to the
fact that both $\{Y_{1},Y_{2},Y_{3}\}$ and $\{Y_{2},Y_{1},Y_{3}\}$ are
solvable structures. Then, when restricted on the level manifold $\Sigma_{3}$,
$\omega_{2}$ and $\omega_{1}$ are the exterior derivative of the functions
\[%
\begin{array}
[c]{l}%
I_{2}=x-\frac{u_{1}}{u_{2}}+\frac{(u_{1}-1)e^{\frac{1-c_{3}u_{1}}{u_{1}}}%
}{u_{1}},\vspace{0.08in}\\
I_{1}=-\frac{1}{2u_{2}}-\frac{1}{2}\frac{(1-2u_{1}+2u_{1}^{2})e^{\frac
{1-c_{3}u_{1}}{u_{1}}}}{u_{1}^{2}},
\end{array}
\]
respectively. It follows that the general solution of (\ref{ex1}) can be
written in the implicit form $\{I_{1}=c_{1},I_{2}=c_{2},I_{3}=c_{3}%
\}$.\bigskip\newline\textbf{Example 2.} Consider the ODE
\begin{equation}
u_{2}=-\frac{x^{2}}{4u^{3}}-u-\frac{1}{2u}.\label{ex2}%
\end{equation}
As shown in \cite{MuRo}, (\ref{ex2}) has no point symmetries but admits a
$\lambda$-symmetry with $\lambda=x/u^{2}$. If we consider the system
\begin{equation}
\left\{
\begin{array}
[c]{l}%
u_{2}=-\dfrac{x^{2}}{4u^{3}}-u-\dfrac{1}{2u},\\
w_{1}=\dfrac{x}{u^{2}}%
\end{array}
\right.  \label{ex_2_cov}%
\end{equation}
a nonlocal symmetry $Y$ of (\ref{ex2}) which corresponds to the $\lambda
$-symmetry found by Muriel and Romero in \cite{MuRo} is that generated by the
functions $\varphi^{1}=ue^{w}$ and $\varphi^{2}=-2e^{w}$. We will search for
solvable structures which extend the nonabelian algebra $\mathcal{G}%
=<Y_{1}=Y,Y_{2}=\partial_{w}>$.\newline By making use of the Maple 11
routines, it is not difficult to find the most general solvable structure
$\{Y_{1},Y_{2},Y_{3}\}$ for $\mathcal{D}=\left\langle Y_{0}\right\rangle $
(recall that, in this case, $Y_{0}$ is the total derivative operator
restricted to (\ref{ex_2_cov})). The determining equations
(\ref{Deter_eq_solv_str}), in fact, now are $L_{Y_{3}}(Y_{j})\wedge
Y_{0}\wedge Y_{1}\wedge Y_{2}=0$, $j=0,1,2$, and admit the general solution
$Y_{3}=a_{1}\partial_{x}+a_{2}\partial_{u}+a_{3}\partial_{u_{1}}+a_{4}%
\partial_{w}$, with%
\[
a_{1}=\frac{\left(  4u^{4}+4u^{2}u_{1}^{2}+4xuu_{1}+x^{2}\right)  F+8u\left(
\left(  uu_{1}+x\right)  a_{2}-u^{2}a_{3}\right)  }{8u^{4}+u^{2}(8u_{1}%
^{2}+4)+8xuu_{1}+2x^{2}},
\]
$a_{2},a_{3},a_{4}$ arbitrary functions of $(x,u,u_{1},w)$ and $F$ an
arbitrary function of $x+\arctan\left(  (2uu_{1}+x)/(2u^{2})\right)
$.\newline Nevertheless, since here we are only interested in the integration
of (\ref{ex2}), we just consider the following particular solution (with
$F=2$, $a_{2}=a_{4}=0$ and $a_{3}=-1/(2u)$)%
\[
Y_{3}=\partial_{x}-\frac{1}{2u}\partial_{u_{1}}.
\]
In this case, since both $\{Y_{1},Y_{2},Y_{3}\}$ and $\{Y_{1},Y_{3},Y_{2}\}$
are solvable structures, one can check that $d\omega_{3}=d\omega_{2}=0$.
Hence, by proceeding like in the previous example, one finds $\omega
_{3}=dI_{3}$, $\omega_{2}=dI_{2}$ with
\[%
\begin{array}
[c]{l}%
I_{2}=2\ln u-w-\ln(4u^{2}u_{1}^{2}+4xuu_{1}+4u^{4}+x^{2}),\\
I_{3}=\arctan\left(  \dfrac{u^{2}u_{1}+xu/2}{u^{3}}\right)  +x.
\end{array}
\]
It follows that, when restricted on the level manifold $\{I_{2}=c_{2}%
,I_{3}=c_{3}\}$, $\omega_{1}$ is an exact $1$-form and one can check that it
is the exterior derivative of the function%
\[
I_{1}=\frac{2e^{c_{2}}u^{2}}{\cos(x-c_{3})^{2}}+2e^{c_{2}}\ln(\cos
(c_{3}-x))-2e^{c_{2}}x\tan(c_{3}-x).
\]
Then, one gets the general solution of (\ref{ex_2_cov}) in the implicit form
$\{I_{1}=c_{1},I_{2}=c_{2},I_{3}=c_{3}\}$. In this case, however, one can
solve with respect to $u$ and gets the solution of (\ref{ex2}) in the form%
\[
u=\pm\cos(x-c_{2})\sqrt{\frac{c_{1}e^{-c_{3}}}{2}-x\tan(x-c_{2})-\ln
(\cos(x-c_{2}))}%
\]
This solution depends on $3$ arbitrary constants, but of course one of them is
inessential. In fact, by suitably rearranging $c_{1}$ one can write the
general solution of (\ref{ex2}) in the form
\[
u=\pm\cos(x-C_{2})\sqrt{C_{1}-x\tan(x-C_{2})-\ln(\cos(x-C_{2}))},\qquad
C_{1},C_{2}\in\mathbb{R}.
\]
This solution coincides with that found in \cite{MuRo}. \bigskip
\newline\textbf{Example 3.} Consider the ODE
\begin{equation}
u_{2}=\frac{u_{1}}{(1+x)x}-\frac{x^{2}+x^{3}}{4(1+x)u^{3}}-\frac{x}%
{2(1+x)u}.\label{ex3}%
\end{equation}
It can be shown that (\ref{ex3}) has no point symmetries but admits a
$\lambda$-covering with $\lambda=x/u^{2}$. Hence, if we consider the covering
system $\mathcal{E}^{\prime}$ defined by (\ref{ex3}) together with
$w_{1}=\lambda$, it admits the nonlocal symmetry $Y$ generated by $\varphi
^{1}=ue^{w}$ and $\varphi^{2}=-2e^{w}$. We will search for solvable structures
which extend the nonabelian algebra $\mathcal{G}=<Y_{1}=Y,Y_{2}=\partial_{w}%
>$.\newline Contrary to the previous example, in this case it is more
difficult to find the most general solution of the determining equations
(\ref{Deter_eq_solv_str}). Nevertheless, one can find some particular
solutions. In this case we just consider the following particular solution%
\[
Y_{3}=\frac{(1+x)(x+2uu_{1})^{2}}{8u_{1}(x+uu_{1})}\partial_{u}+\frac
{x+x^{2}+2uu_{1}+2xuu_{1}}{4u^{2}(x+uu_{1})}\partial_{w}.
\]
As in the previous example, since both $\{Y_{1},Y_{2},Y_{3}\}$ and
$\{Y_{1},Y_{3},Y_{2}\}$ are solvable structures, one can check that
$d\omega_{3}=d\omega_{2}=0$. Then, by proceeding as in the previous examples,
one gets the general solution of (\ref{ex3}):
\begin{equation}
u=\pm(C-2x+2\ln(1+x))\left(  \int\frac{-x}{\left(  C-2x+2\ln(1+x)\right)
^{2}}dx\right)  ^{1/2},\label{sol_ex_3}%
\end{equation}
with $C\in\mathbb{R}$. Notice that, as in the previous example, in the general
integral of the covering system $\mathcal{E}^{\prime}$ there was a third
inessential constant we have gauged out in (\ref{sol_ex_3}). Moreover, we have
absorbed one of the arbitrary constants in the indefinite integral.
\bigskip\newline\textbf{Example 4.} Consider the ODE
\begin{equation}
u_{2}=(xu_{1}-xu^{2}+u^{2})e^{-1/u}+\frac{2u_{1}^{2}}{u}+u_{1}.\label{ex4}%
\end{equation}
It can be shown that (\ref{ex4}) has no point symmetries but admits a
$\lambda$-covering with $\lambda=xe^{-1/u}-1/x$. Hence, if we consider the
covering system $\mathcal{E}^{\prime}$ defined by (\ref{ex4}) together with
$w_{1}=\lambda$, it admits the nonlocal symmetry $Y$ generated by $\varphi
^{1}=xe^{w}$ and $\varphi^{2}=xe^{w}$. We will search for solvable structures
which extend the nonabelian algebra $\mathcal{G}=<Y_{1}=Y,Y_{2}=\partial_{w}%
>$.\newline Also in this case, it is very complicated to find the most general
solution of the determining equations (\ref{Deter_eq_solv_str}). Nevertheless,
one can again find some particular solutions. In this case we just consider
the following particular solution%
\[
Y_{3}=u^{2}e^{x}\partial_{u_{1}}.
\]
Contrary to the previous examples, $\{Y_{1},Y_{3},Y_{2}\}$ is not a solvable
structure, hence in this case only $\omega_{3}$ is closed. By proceeding as in
the previous examples, one still can find the general solution of the covering
system $\mathcal{E}^{\prime}$. Then, by solving with respect to $u$ one gets
the general solution of (\ref{ex4}):
\begin{equation}
u=\frac{1}{Ce^{x}+\ln\left(
%TCIMACRO{\dint }%
%BeginExpansion
{\displaystyle\int}
%EndExpansion
\dfrac{-x}{e^{Ce^{x}}}dx\right)  },\label{sol_ex_4}%
\end{equation}
with $C\in\mathbb{R}$. Here, as in the previous examples, we have gauged out
from (\ref{sol_ex_4}) all inessential arbitrary constants. \bigskip
\newline\textbf{Example 5.} Consider the ODE
\begin{equation}
u_{2}=\frac{2u_{1}^{2}}{u}+\left(  xe^{x/u}-\frac{4}{x}\right)  u_{1}-\left(
\frac{3u^{2}}{x}+u\right)  e^{x/u}+xu^{2}+\frac{2u}{x^{2}}.\label{ex5}%
\end{equation}
It can be shown that (\ref{ex5}) has no point symmetries but admits a
$\lambda$-covering with $\lambda=xe^{x/u}$. Hence, if we consider the covering
system $\mathcal{E}^{\prime}$ defined by (\ref{ex5}) together with
$w_{1}=\lambda$, it admits the nonlocal symmetry $Y$ generated by $\varphi
^{1}=e^{w}u^{2}/x$ and $\varphi^{2}=-e^{w}$. We will search for solvable
structures which extend the nonabelian algebra $\mathcal{G}=<Y_{1}%
=Y,Y_{2}=\partial_{w}>$.\newline Also in this case, it is very complicated to
find the most general solution of the determining equations
(\ref{Deter_eq_solv_str}). Nevertheless, one can again find some particular
solutions. In this case we just consider the following particular solution%
\[
Y_{3}=\frac{u^{2}}{x^{3}}\partial_{u_{1}}.
\]
By proceeding as in the previous examples, one still can find the general
solution of the covering system $\mathcal{E}^{\prime}$. Then, by solving with
respect to $u$ one gets the general solution of (\ref{ex5}):
\begin{equation}
u=\frac{-x^{2}}{C+\dfrac{x^{5}}{20}+x\ln\left(
%TCIMACRO{\dint }%
%BeginExpansion
{\displaystyle\int}
%EndExpansion
\dfrac{-x}{e^{\frac{C}{x}+\frac{x^{4}}{20}}}dx\right)  },\label{sol_ex_5}%
\end{equation}
with $C\in\mathbb{R}$. Here, as in the previous examples, we have gauged out
from (\ref{sol_ex_5}) all inessential arbitrary constants.

\section{Concluding remarks}

Solvable structures for a 1-dimensional distribution $\mathcal{D}$ can be determined by
solving the system of equations (\ref{Deter_eq_solv_str}). The analysis of
these determining equations is in general more
feasible if one makes use of symbolic manipulation packages like those written
for Maple V distributions. In particular, we suggest the packages
DifferentialGeometry and Jets by I. Anderson (and coworkers) and M. Marvan,
respectively. The first package is recommended for various computations
involved in the determination of solvable structures and the further
integration of the forms $\omega_{i}$. The second package is very helpful in
the computation of symmetries and the analysis of differential consequences of
a given system of equations.

\begin{acknowledgement}
The authors would like to thank G. Cicogna and G. Gaeta for constructive
discussions and suggestions. This work was partially supported by GNFM-INDAM
through the project \textquotedblright Simmetrie e riduzione per PDE, principi
di sovrapposizione e strutture nonlocali\textquotedblright.
\end{acknowledgement}

%\section*{References}

\end{document}